\newcommand*\circled[1]{\tikz[baseline=(char.base)]{
            \node[shape=circle,draw,inner sep=0.5pt] (char) {#1};}}
\begin{document}
\setlength{\parskip}{0pt}

\papertype{Research Article}

\title{Multi-step Reflection Principle and Barrier Options}


\author[1]{Hangsuck Lee}
\author[1]{Gaeun Lee}
\author[2]{Seongjoo Song}

\affil[1]{Department of Actuarial Science/Mathematics, Sungkyunkwan University, Myeongnyun 3(sam)ga, Jongno-gu,Seoul, 03063, Korea}
\affil[2]{Department of Statistics, Korea University, Anam-Ro 145, Seongbuk-Gu, Seoul 02841, Korea}

\corraddress{Seongjoo Song, Department of Statistics, Korea University, Anam-Ro 145, Seongbuk-Gu, Seoul 02841, Korea}
\corremail{sjsong@korea.ac.kr}

\runningauthor{Lee et al.}

\maketitle

\begin{abstract}

This paper examines a class of barrier options--\textit{multi-step barrier options}, which can have any finite number of barriers of any level. We obtain a general, explicit expression of option prices of this type under the Black-Scholes model. Multi-step barrier options are not only useful in that they can handle barriers of different levels and time steps, but can also approximate options with arbitrary barriers. Moreover, they can be embedded in financial products such as deposit insurances based on jump models with simple barriers. Along the way, we derive \textit{multi-step reflection principle}, which generalizes the reflection principle of Brownian motion.

\keywords{Brownian motion, \ reflection principle, \ multi-step reflection principle, \ Esscher transform, \ barrier option, \ multi-step barrier, \  icicles}
\noindent
{\bf Data Availability Statement:} Data sharing is not applicable to this article as no new data were created or analyzed in this study.
\end{abstract}

\section{INTRODUCTION}
Barrier options are a type of options whose payoff depends on whether the underlying asset price reaches a pre-determined barrier level. Ordinary barrier options have one horizontal barrier level throughout the entire lifetime, but \cite{lee2019a} proposed step barrier options, allowing three different piecewise constant barrier levels as in Figure \ref{Fig1}.  In this paper, we propose \textit{multi-step barrier options}, which generalize the step barrier options for cases with more flexible barrier structure. They can be further generalized by vertical branches of barriers, called \textit{icicles}, which enables an easy embedding of multi-step barrier options into equity-linked products. See Figure \ref{Fig2} for a multi-step barrier with more subperiods. For a simple representation, we leave out icicles in Figure \ref{Fig1} and Figure \ref{Fig2}.

\begin{figure}[H]
\centering
\includegraphics[width=0.6\textwidth]{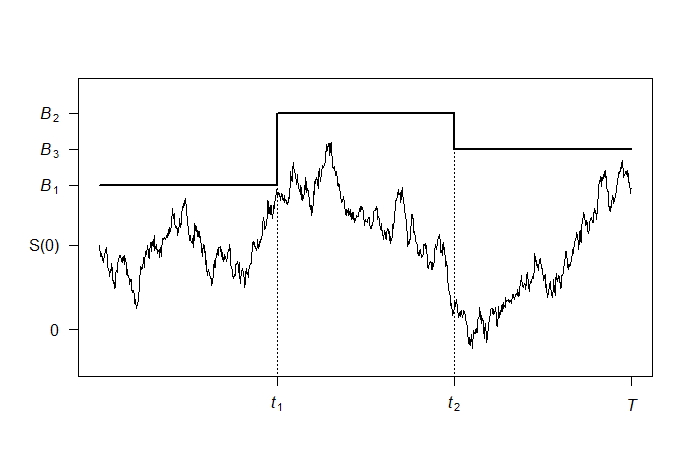}
\caption[step]{Three-step Barrier}
\label{Fig1}
\end{figure}

\begin{figure}[H]
\centering
\includegraphics[width=0.6\textwidth]{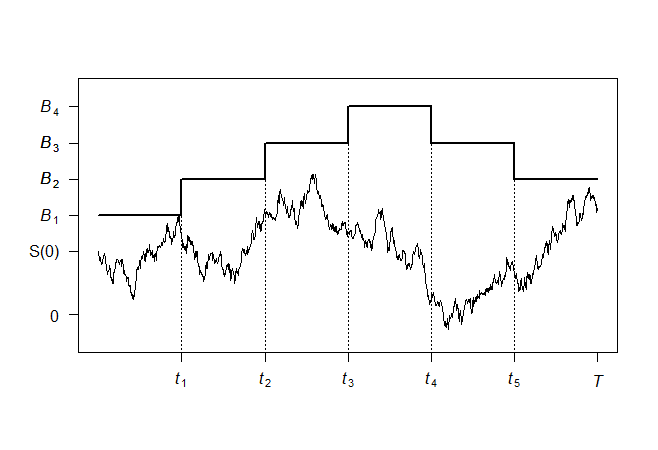}
\caption[mstep]{Multi-step Barrier}
\label{Fig2}
\end{figure}

At first sight, there may not seem to be much of a difference except the number of steps between step barriers proposed in \cite{lee2019a} and multi-step barriers proposed in this paper. However, what we obtain in this paper means more than just an increased number of steps. It offers a unified general expression for barrier option valuation formula with any finite number of steps. The formula includes the result of \cite{leeko2018}, \cite{lee2019a}, and \cite{lee2019b} as special cases. Partial barrier options with different barrier levels can also be handled with our formula. Besides, 
multi-step barrier options have the advantage of easily approximating options with arbitrary barriers, including curved barriers, with sufficiently many steps. Figure \ref{Fig3} shows examples of approximating curved barriers with multi-step barriers. Since the multi-step barrier probabilities in Theorem \ref{thm1} allow any finite number of steps and any barrier levels, more complicated barriers can be approximated by multi-step barriers with many steps. Subintervals of time do not have to be equally-spaced in the approximation.

\bigskip
\begin{figure}[h]
\centering
\begin{subfigure}{.6\textwidth}
\centering
\includegraphics[width=0.9\textwidth]{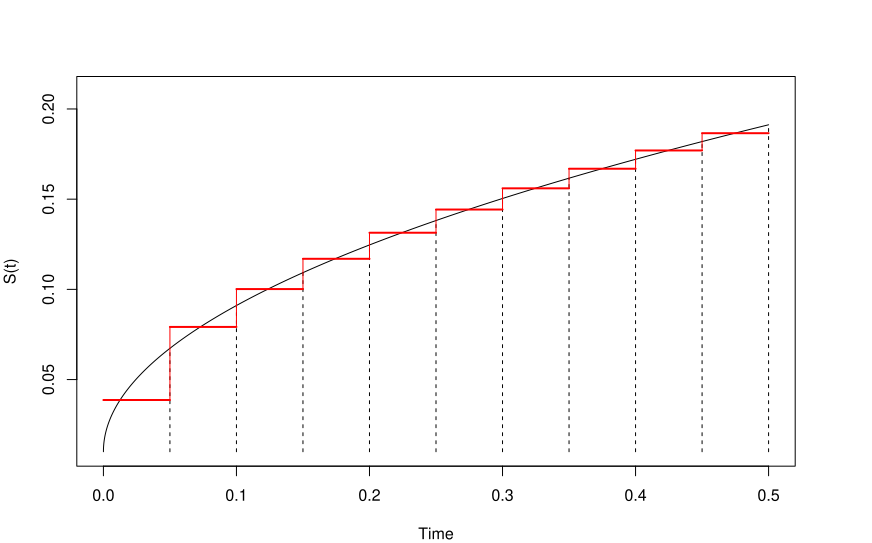}
\label{Curved1}
\end{subfigure}%

\begin{subfigure}{.6\textwidth}
\centering
\includegraphics[width=0.9\textwidth]{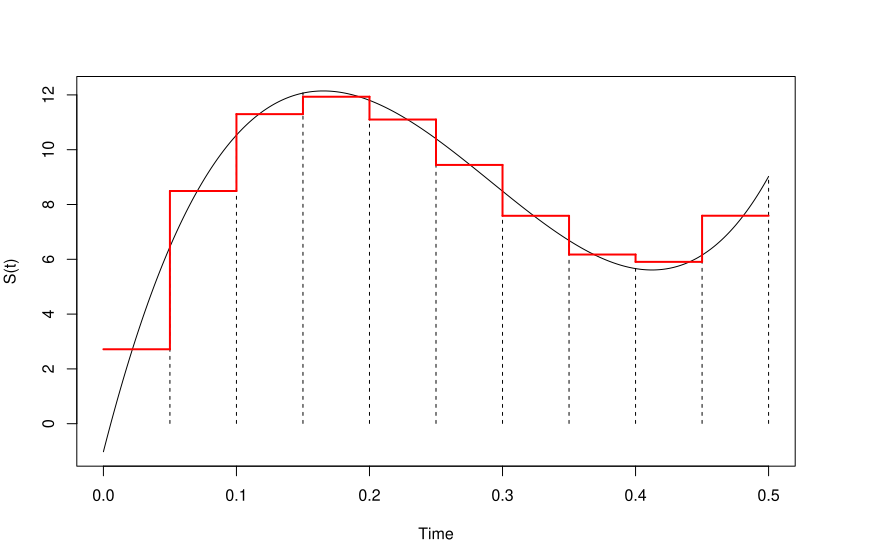}
\label{Curved2}
\end{subfigure}%

\caption[curved barriers]{Approximation of curved barriers with multi-step barriers}
\label{Fig3}
\end{figure}

Multi-step barrier options, while having flexible payoff structures, allow for explicit pricing formulas under the Black-Scholes model. To the purpose, the joint distribution of a Brownian motion at the end of subintervals of time and their partial maximums is to be obtained. When $\{X(t): t \geq 0\}$ is a Brownian motion with drift parameter $\mu$ and diffusion parameter $\sigma$ and $M(s,t)$ is the maximum of this Brownian motion during the time interval from $s$ to $t$, we obtain the joint distribution of $(X(t_1), \cdots, X(t_n))$ and partial maximums $(M(0, t_1), M(t_1, t_2), \cdots, M(t_{n-1}, t_n))$ in that Brownian motion hits step barriers at selected subintervals. One way of finding such joint distribution is the repeated application of reflection principle of the Brownian motion. Here, we generalize the reflection principle of the Brownian motion to allow multiple reflections and call the generalization as the \textit{multi-step reflection principle}. In the multi-step reflection, we use possibly different levels of reflecting barriers as opposed to \cite{lee2019b} which deals with identical barrier levels for every subinterval. This joint distribution plays a critical role in our derivation of the pricing formulas for multi-step barrier options and their variants. Through numerical examples, we explore the pricing formulas along with Monte Carlo simulated prices. We also demonstrate their applicability to the analysis of an investment with an arbitrary curved barrier in Examples \ref{ex4} and \ref{ex5}. Table \ref{Kunitomo8} compares the approximated prices by multi-step barriers with prices computed by the formula given in \cite{kunitomo1992}.

Note that our pricing methodology is completely probabilistic so that the mathematical background regarding partial differential equations (PDE) is not necessary. Instead, we exploit the method of Esscher transform and the factorization formula.

The Esscher transform is a time-honored actuarial tool, which has been applied to option pricing theory since \cite{gerbershiu1994}. Because it allows for efficient calculation of probability and expectation, complicated options embedded in equity-linked insurance contracts could be evaluated explicitly using it. 
See, for instance, \cite{tiong2000}, \cite{lee2003}, \cite{ngli2011}, \cite{gerberetal2012} and more recently, \cite{leeko2018}. As with previous literature, the Esscher transform 
and the factorization formula extremely simplify tedious calculations throughout this paper.
Esscher transform is also very popular in finding a risk-neutral measure in incomplete markets. Since \cite{gerbershiu1996} had justified the usage of the Esscher measure in an incomplete market, a considerable number of studies have been performed in this direction. See \cite{buhlmann1996}, \cite{schoutens2003}, \cite{chan1999}, \cite{jang2004}, and \cite{elliott2007} among others.

More complicated barrier options than ordinary ones are often found in the finance literature. To name a few, \cite{kunitomo1992} expressed the option price with double barriers in terms of infinite series and \cite{heynen1994} studied the pricing of barrier options where the barrier is restricted to a part of the options' lifetime. \cite{rogers1997} transformed the option pricing problem with smoothly moving double barriers to a fixed barrier option pricing problem which is solved numerically using a trinomial tree method. \cite{guillaume2010} investigated step double barrier options and \cite{hwang2015} dealt with an investment product whose payoff depends on certain pre-specified conditions. The fair premium of this product can be also represented with joint probabilities of univariate Brownian motion at a couple of time points and its running minimum as shown in \cite{wang2016}. \cite{leeko2018} introduced an icicled barrier option and their idea was extended in \cite{lee2019a} by setting the barrier as a piecewise constant one with icicles. \cite{lee2019b} considered the partial barriers at the same level with many icicles. 

The main purpose of this paper is two-fold: one is to introduce the multi-step reflection principle of Brownian motion and the other is to introduce the multi-step barrier options with or without icicles and find their explicit pricing formulas under the Black-Scholes model that do not involve an infinite series. The general expression for barrier option valuation formula with any finite number of steps is important in several ways. By allowing a more flexible barrier structure, the general pricing formula not only has a wider range of applications but can also approximate arbitrary barrier option prices with a desired precision. If we add icicles to the barrier structure, the range of application of multi-step barrier options will even be wider. Our formula can be applied to more general valuation of equity-linked products such as what is given in \cite{lee2019b} with more pliable payoff structure.

We may want to consider a more complicated model rather than the Black-Scholes model, but the Black-Scholes model would be a good starting point because the explicit pricing formulas even under the Black-Scholes model have never been obtained in the literature to our knowledge. And yet, nevertheless, multi-step barrier options can be applied to some jump models of the underlying asset. For example, when the jump events of the underlying asset are modeled by a binomial approach, we can transform the setting from a single barrier with jumps to a multi-step barrier without jumps. In other words, the differences in levels of multi-step barriers might be regarded as jump sizes of the underlying asset. In this direction, the multi-step barrier options can be embedded in other financial products such as deposit insurance schemes. 
For a deposit insurance that provides the guaranty when the underlying asset touches the regulatory barrier, we can take advantage of multi-step barriers to reflect the sudden movements of assets and liabilities in obtaining the explicit formula for its premium. Figure \ref{Fig4} illustrates this transformation from a jump model with a flat barrier to a continuous model with multi-step barriers.

\begin{figure}[h]
\centering
\begin{subfigure}{.6\textwidth}
\centering
\includegraphics[width=0.9\textwidth]{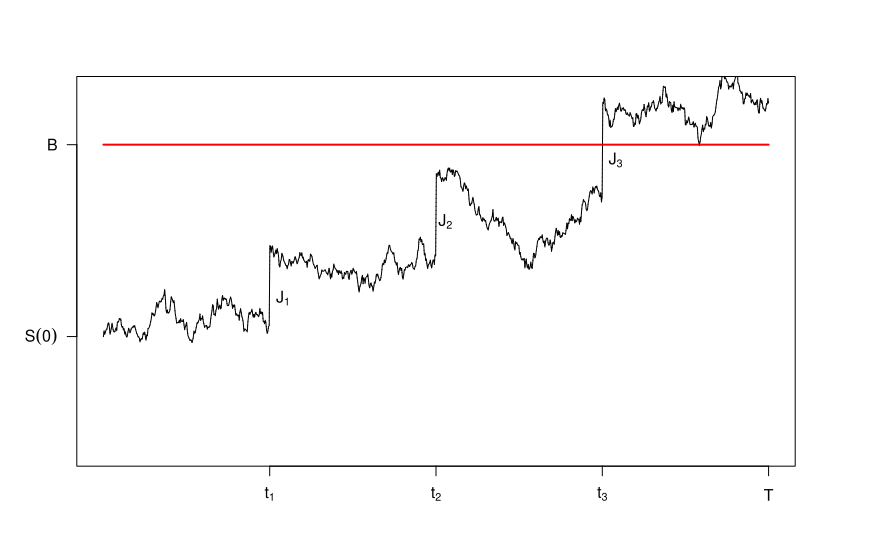}
\caption[jump_flat]{Sample path with jumps and a flat barrier}
\label{Jump1}
\end{subfigure}%

\begin{subfigure}{.6\textwidth}
\centering
\includegraphics[width=0.9\textwidth]{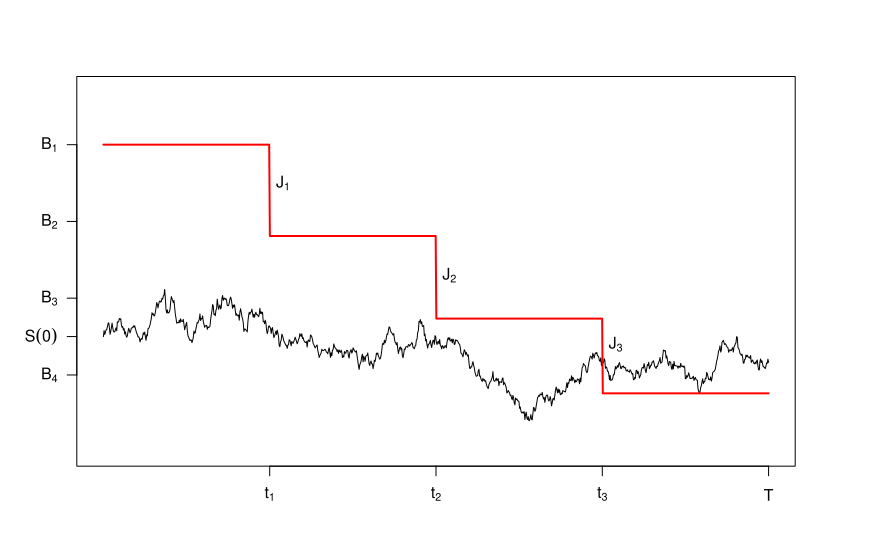}
\caption[nojump_step]{Sample path without jumps and a multi-step barrier}
\label{Jump2}
\end{subfigure}%

\caption[jump models]{Application of multi-step barriers to jump models; $J_1$, $J_2$, and $J_3$ are jump sizes}
\label{Fig4}
\end{figure}

The remainder of this paper is organized as follows. In Section 2, we formulate our mathematical framework and review some relevant materials including the Esscher transform. In Section 3, we introduce the multi-step reflection principle and derive the joint probabilities of our interest in terms of multivariate normal distributions. In Section 4, we find explicit pricing formulas for various icicled multi-step barrier options and explore option prices numerically at various input levels of barriers, interest rates, volatilities, strike prices, and maturities, along with Monte Carlo simulated prices for selected cases. In Section 5, we discuss examples of approximating the option prices with curved barriers using multi-step barrier option prices and in Section 6, we conclude the paper with some comments about the future research. 

\section{PRELIMINARIES}
\label{prelim}
Assume that the underlying asset price or index at time $t$ is described as
$$S(t) =S(0) e^{X(t)}, \qquad t \geq 0,$$
where $\{X(t): t \geq 0\}$ is a Brownian motion with the drift parameter $\mu$ and the diffusion parameter $\sigma$. 
With the maximum of the Brownian motion during the time interval from $s$ to $t$ denoted by
$$M(s,t)=\max \{ X(\tau): s \leq \tau \leq t \},$$ 
it is well known that the joint distribution of $X(t)$ and $M(0,t)$ is given by
\begin{equation}
\label{reflection}
Pr(X(t) \leq x, M(0,t) \leq m) = \Phi\left(\frac{x \wedge m-\mu t}{\sigma \sqrt{t}}\right)- e^{\frac{2 \mu}{\sigma^2}m} \Phi\left(\frac{x \wedge m -2m-\mu t}{\sigma\sqrt{t}}\right),
\end{equation}
for $m \geq 0$. Here $\Phi(\cdot)$ denotes the standard normal distribution function. The discussion paper of \cite{huangshiu2001} explains how to derive (\ref{reflection}) by using the method of Esscher transform. For an alternative derivation, we refer the readers to \cite{harrison1990}. It is obvious from (\ref{reflection}) that for $x \leq m$,
\begin{equation}
\label{reflection2}
Pr(X(t) \leq x, M(0,t) > m) =  e^{\frac{2 \mu}{\sigma^2}m} \Phi\left(\frac{x-2m-\mu t}{\sigma\sqrt{t}}\right)= e^{\frac{2 \mu}{\sigma^2}m} Pr(X(t) +2m \leq x).
\end{equation}

Let us briefly review the Esscher transform and the factorization formula. Esscher transform has been an extensively used tool in actuarial science since 
its development in \cite{gerbershiu1994}. For the usage of Esscher transform in equity-linked products, see, for example, \cite{tiong2000}, \cite{lee2003}, and \cite{gerberetal2012} among others. 
$\{ e^{h X(t)}/E(e^{hX(t)})\}$ is a positive martingale to define a new probability measure $Q$, which we call as Esscher measure of the parameter $h$ of the original probability measure $P$. 
For a random variable $Y$ that is a function of $\{X(t),  0 \leq t \leq T \}$, the expectation of $Y$ under $Q$ can be obtained as
\begin{equation}
\label{expyh}
E^Q(Y)=E^P\left(Y \frac{e^{h X(T)}}{E^P(e^{hX(T)})} \right).
\end{equation}
The expectation of $Y$ under the Esscher measure of the parameter $h$, $Q$, in (\ref{expyh}) will be denoted as $E(Y;h)$ in this paper. 
We can easily show that under $Q$, the Brownian motion with the drift, $\mu$ and the diffusion coefficient, $\sigma$ is changed to the Brownian motion with the drift, $\mu + h \sigma^2$ and the same diffusion coefficient. It allows us to find the risk-neutral measure under which the discounted stock price becomes a martingale. Esscher transform with the parameter $h^*$ such that $\mu+h^*\sigma^2 =r -\frac{1}{2} \sigma^2$ defines the risk-neutral measure to be used in the option pricing. 

In the pricing of barrier options, the following form of the factorization formula is useful. $I( \cdot)$ is the indicator function and $Pr(B;h)$ is the probability of the event $B$ under the Esscher measure of parameter $h$.
\begin{align*}
E(e^{X(T)}I(B);h^*)&=E^P(e^{X(T)}I(B) \frac{e^{h^* X(T)}}{E^P(e^{h^* X(T)})})\\
      &=\frac{E^P(e^{(h^*+1)X(T)})}{E^P(e^{h^*X(T)})} \frac{E^P(I(B)e^{(h^*+1)X(T)})}{E^P(e^{(h^*+1)X(T)})}\\
      &=E(e^{X(T)};h^*)E(I(B);h^*+1)\\
      &=e^{rT}Pr(B; h^*+1).
\end{align*}

\section{Multi-step Reflection Principle and Multi-step Barriers}\label{derivation}

In this section, we introduce the multi-step reflection principle and calculate joint probabilities that are useful in computing option prices with multi-step barriers. Before the mathematical details of the multi-step reflection principle, let us describe the way it works with a sample path of a Brownian motion with zero drift. Figures \ref{multirp:sub1}-\ref{multirp:sub3} show how the reflection behaves with varying barrier levels. Suppose we have the original path of a Brownian motion $\{ X^0(t): t \geq 0\}$ with zero drift and the diffusion coefficient $\sigma$ as in Figure \ref{multirp:sub1}. The superscript $0$ of $X$ is used to emphasize that the drift of the underlying Brownian motion is zero.  Among subintervals of $(t_0, t_1)$, $(t_1,t_2)$, and $(t_2, t_3)$, the first and the third subintervals have horizontal barriers with levels of $m_1$ and $m_3$, respectively. Also, there are icicles, $x_1 \leq m_1$, $x_2 \leq m_3$, and $x_3 \leq m_3$ at time points $t_1$, $t_2$, and $t_3$, respectively. The reason that $x_i$'s are called as icicles is obvious from the figure. 

\begin{figure}[H]
\centering
\includegraphics[width=0.7\textwidth]{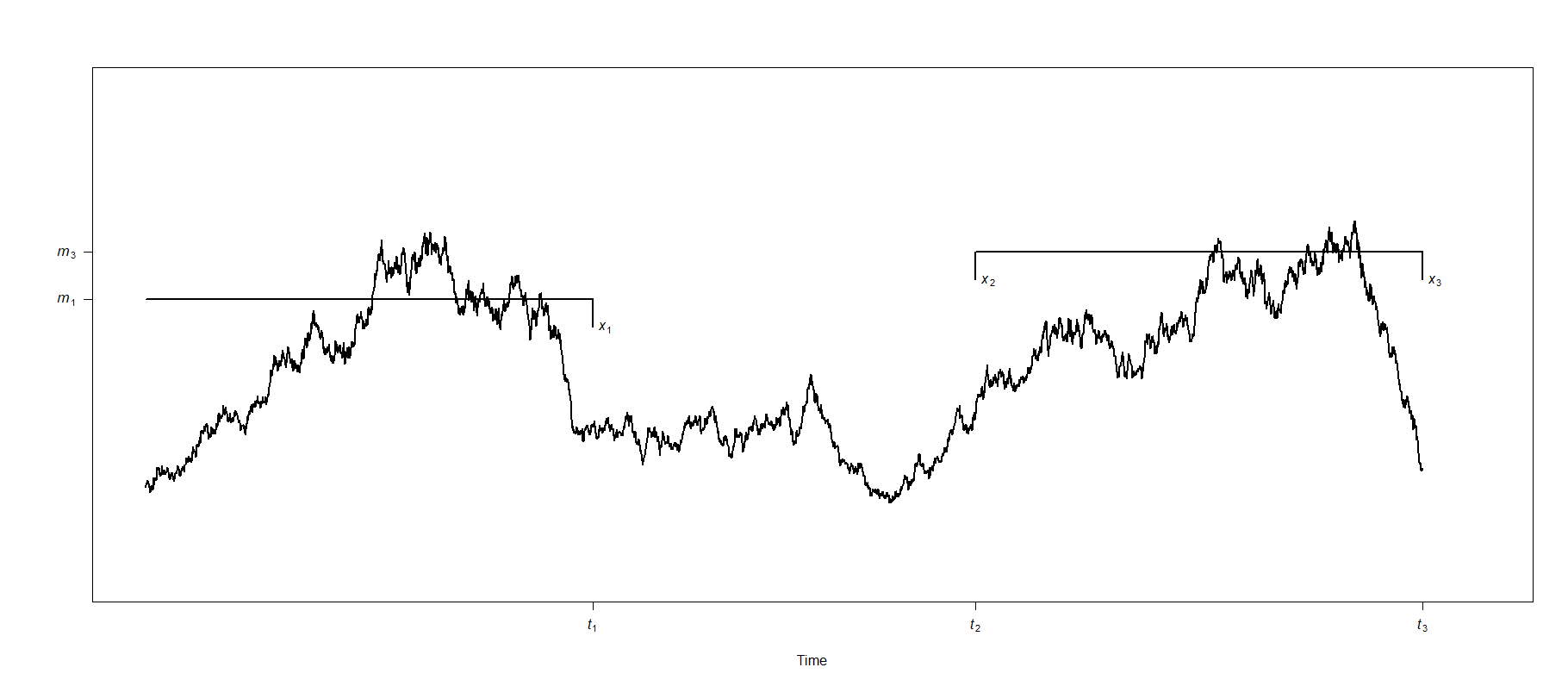}
\caption[Original]{Original path with partial barriers}
\label{multirp:sub1}
\end{figure}

To find the probability 
$$Pr(X^0(t_1) \leq x_1, X^0(t_2) \leq x_2, X^0(t_3) \leq x_3 , M^0(t_0, t_1) > m_1, M^0(t_2,t_3)>m_3),$$
we will use the ordinary reflection principle twice. Here, $M^0(s,t)$ is $\max \{X^0(\tau): \tau \in (s,t) \}$. When the path hits the barrier $m_1$ in $(t_0, t_1)$, the path is reflected around $m_1$. Since the reflected process follows the same Brownian motion as the original process by the ordinary reflection principle, we can easily see that
\begin{align*}
Pr(&X^0(t_1) \leq x_1, X^0(t_2) \leq x_2, X^0(t_3) \leq x_3 , M^0(t_0, t_1) > m_1, M^0(t_2,t_3)>m_3)\\
     &=Pr(2m_1 -X^0(t_1) \leq x_1, 2m_1 -X^0(t_2) \leq x_2, 2m_1-X^0(t_3) \leq x_3, \\
     &\qquad \qquad  \max_{t_2 < \tau <t_3} \{2m_1 -X^0(\tau) \} > m_3)\\
    &=Pr(X^0(t_1) \geq 2m_1-x_1, X^0(t_2) \geq 2m_1-x_2, X^0(t_3) \geq 2m_1-x_3, \\
    &\qquad \qquad \min_{t_2 < \tau <t_3} \{X^0(\tau)\} < 2m_1-m_3).
    \end{align*}

The once reflected path is shown in Figure \ref{multirp:sub2}. Note that the second barrier in $(t_2, t_3)$ becomes a lower barrier at $2m_1-m_3$ after the reflection. Moreover, the reflected icicles are stretched from the barrier upwards. 

When the once reflected path hits the lower barrier $2m_1-m_3$ in $(t_2, t_3)$, then the path is again reflected as shown in Figure \ref{multirp:sub3}. Then,
\begin{align*}
Pr(&X^0(t_1) \geq 2m_1-x_1, X^0(t_2) \geq 2m_1-x_2, X^0(t_3) \geq 2m_1-x_3, \\
    &\qquad \qquad \min_{t_2<\tau <t_3} \{X^0(\tau) \} < 2m_1-m_3)\\
    &=Pr(X^0(t_1) \geq 2m_1-x_1, X^0(t_2) \geq 2m_1-x_2, 2(2m_1-m_3)-X^0(t_3) \geq 2m_1 -x_3)\\
    &=Pr(X^0(t_1) \geq 2m_1-x_1, X^0(t_2) \geq 2m_1-x_2, -X^0(t_3)-2(m_3-m_1) \geq -x_3)\\
    &=Pr(-X^0(t_1)+2m_1 \leq x_1, -X^0(t_2)+2m_1 \leq x_2, X^0(t_3)+2(m_3-m_1) \leq x_3).
    \end{align*}    

Now, we generalize the above argument with arbitrary number of subintervals and horizontal barriers in the next theorem.
\begin{figure}[H]
\centering
\includegraphics[width=0.8\textwidth]{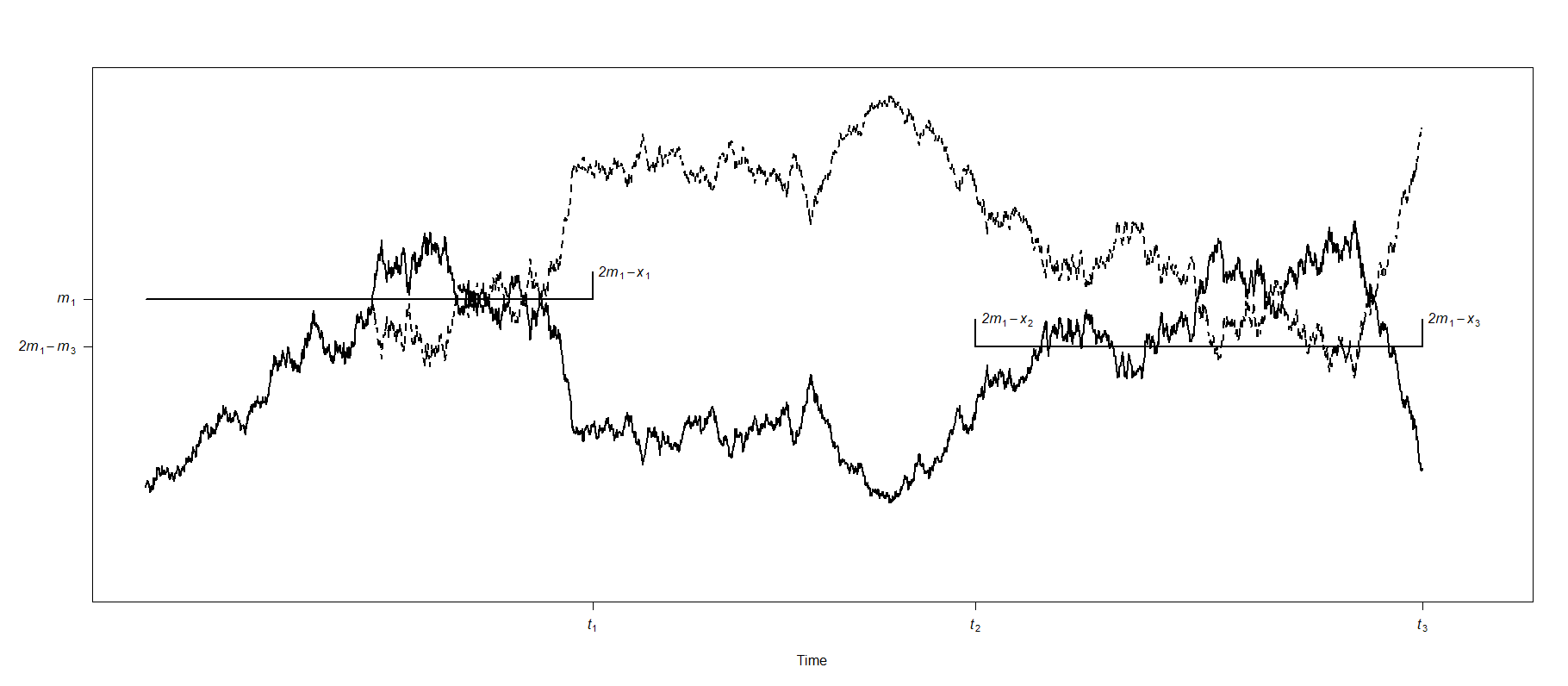}
\caption[Once Reflected]{Once Reflected path with partial barriers}
\label{multirp:sub2}
\end{figure}

\begin{figure}[H]
\centering
\includegraphics[width=0.8\textwidth]{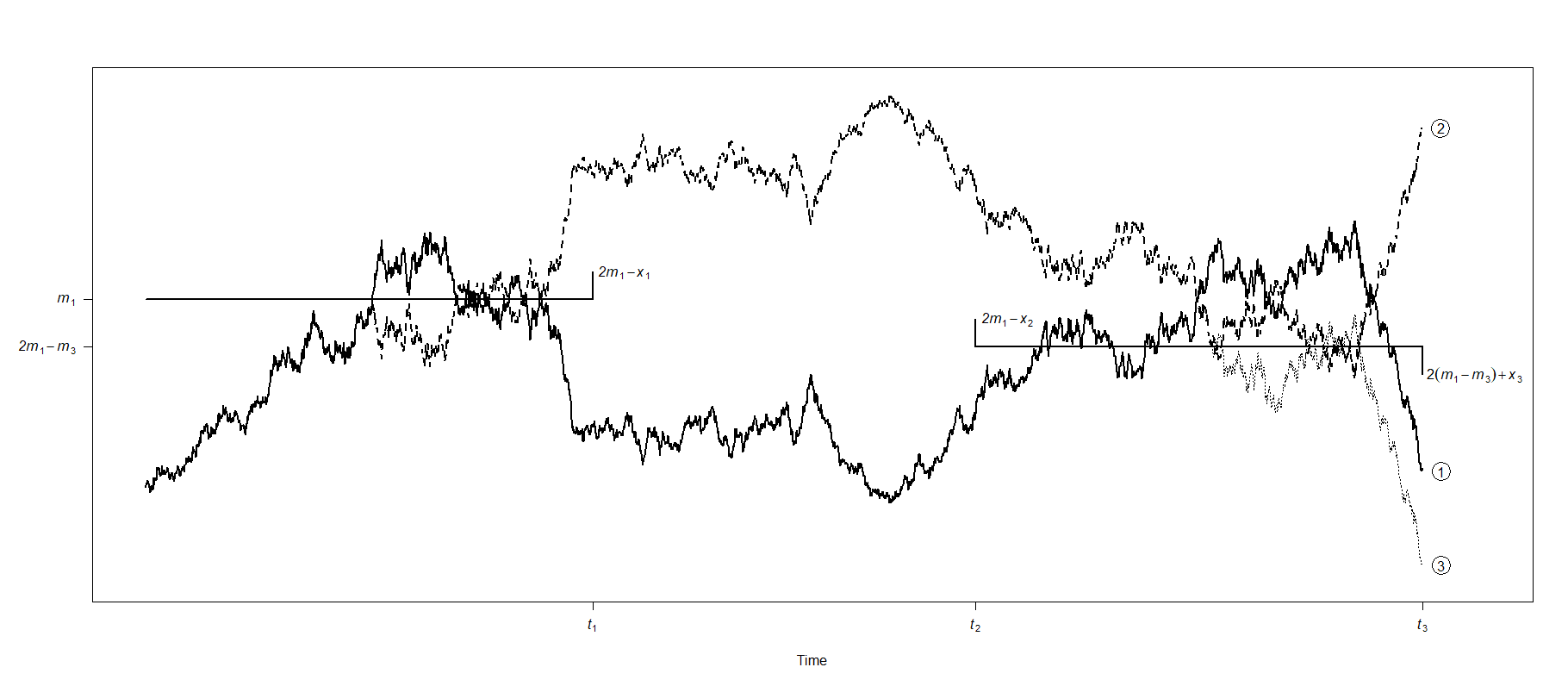}
\caption[Twice Reflected]{Twice Reflected path with partial barriers: \circled{1} is the original path, \circled{2} is the once reflected path, and \circled{3} is the twice reflected path}
\label{multirp:sub3}
\end{figure}

\bigskip
\begin{theorem}[Multi-step Reflection Principle]
\label{thm0}
Consider time points of $0=t_0 < t_1 < \cdots <t_{n-1}<t_n =  T$ and let $J$ be an arbitrary subset of $\{ 1, 2, \cdots, n\}$.
    
Let $\{X^0(t): 0 \leq t \leq T \}$ denote a Brownian motion with zero drift and diffusion coefficient $\sigma$ and $M^0(s,t)$ denote the maximum of $X^0$ in the time interval from $s$ to $t$. Then
\begin{equation}
\label{eqn3_30}
\begin{split}
Pr(\cap_{i=1}^n \{ X^0(t_i) \leq x_i \} , &\cap_{\{k \in J\}} \{M^0(t_{k-1}, t_k) > m_k\})\\
                   &= Pr( \cap_{i=1}^n \{s_i X^0(t_i) +2m[i] \leq x_i \}),
\end{split}
\end{equation}
where $\{m[i]: i=0, \cdots, n\}$ is a sequence such that $m[0] = 0$ and
\begin{equation}
\label{eqn3_1}
m[i] := (m_i - m[i-1])I(i  \in J) + m[i-1]I(i \notin J),
\end{equation}
and
\begin{equation}  
\label{eqn3_2}
s_i = \begin{cases} 1, & \text{ if the number of elements that are greater than $i$ in $J$ is even,}\\

-1, & \text{otherwise,}
\end{cases}
\end{equation}
for $i=1, \cdots, n$. In other words, $s_i=1$ if the cardinality of the set $\{k>i : k \in J\}$ is even. Note that $s_n = 1$. Here, we set $m_1 \geq 0$, $x_{k-1} \leq m_k$ and $x_{k} \leq m_k$ for $k \in J$.
\end{theorem}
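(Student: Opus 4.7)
My plan is to prove Theorem \ref{thm0} by induction on $|J|$, repeatedly applying the ordinary reflection principle (\ref{reflection2}) via the strong Markov property of Brownian motion.

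The base case $|J|=0$ is immediate: the recursion (\ref{eqn3_1}) gives $m[i]\equiv 0$, the parity condition (\ref{eqn3_2}) gives $s_i\equiv 1$, and both sides of (\ref{eqn3_30}) reduce to $Pr(\bigcap_i\{X^0(t_i)\le x_i\})$.

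For the inductive step, let $k=\min J$ and let $\tau$ denote the first hitting time of level $m_k$ in $(t_{k-1},t_k)$. By the strong Markov property applied at $\tau$, combined with the one-step reflection formula (\ref{reflection2}) applied to the Brownian segment past $t_{k-1}$ (which starts at $X^0(t_{k-1})\le x_{k-1}\le m_k$, below the barrier), I would reflect the path past $\tau$ to form $X^{(1)}(t)=X^0(t)I(t\le\tau)+(2m_k-X^0(t))I(t>\tau)$, itself a zero-drift Brownian motion. Under this reflection: (i) the constraints $X^0(t_i)\le x_i$ for $i<k$ are unchanged, since $t_i<\tau$; (ii) for $i\ge k$ they become $X^{(1)}(t_i)\ge 2m_k-x_i$; (iii) the local max condition $M^0(t_{k-1},t_k)>m_k$ is absorbed, being automatic from $X^{(1)}(t_k)\ge 2m_k-x_k\ge m_k$; and (iv) for each $j\in J$ with $j>k$, since $(t_{j-1},t_j)$ lies entirely past $\tau$, the max condition $M^0(t_{j-1},t_j)>m_j$ becomes the min condition $\min_{t_{j-1}<s<t_j}X^{(1)}(s)<2m_k-m_j$. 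This reduces the problem to one with $|J|-1$ remaining barriers, and iterating the same idea---using a min analogue of (\ref{reflection2}) for the new min constraints, or equivalently inserting the sample-path symmetry $X\sim -X$ to re-express them as max constraints---eliminates every barrier after $\ell=|J|$ rounds.

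Composing these $\ell$ reflections, a reflection at the $r$th barrier $k_r$ with current level $M_r$ contributes a sign flip and a shift $2M_r$ to $X^0(t_i)$ for every $i\ge k_r$. Therefore each $X^0(t_i)$ is transformed into $a_i X^0(t_i)+b_i$ with $a_i=(-1)^{\#\{r\,:\,k_r\le i\}}$, and $b_i$ equal to an alternating sum of the $M_r$. A short induction on $r$ shows that the recursively defined $M_r$ make this alternating sum telescope to $2\bigl[m_{k_{j^*_i}}-m_{k_{j^*_i-1}}+\cdots\pm m_{k_1}\bigr]$ where $j^*_i=\#\{r:k_r\le i\}$, which is precisely $2m[i]$ as prescribed by (\ref{eqn3_1}). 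A final overall symmetry $X^0\to -X^0$, applied exactly when $|J|$ is odd, converts $a_i=(-1)^{\#\{k\in J\,:\,k\le i\}}$ into $s_i=(-1)^{\#\{k\in J\,:\,k>i\}}$ and enforces $s_n=1$, recovering (\ref{eqn3_2}).

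The main obstacle is this combinatorial bookkeeping: verifying that the cumulative signs and shifts arising from the composed reflections line up exactly with the recursions (\ref{eqn3_1}) and (\ref{eqn3_2}). A subsidiary technical point is that each reflection must act on a Brownian segment genuinely starting on the correct side of the relevant barrier; the hypotheses $m_1\ge 0$, $x_{k-1}\le m_k$, and $x_k\le m_k$ for $k\in J$ are precisely what guarantees this at every stage of the induction.
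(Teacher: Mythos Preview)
Your proposal is correct and mirrors the paper's own proof: both argue by induction on $|J|$, eliminate one barrier at a time via the ordinary reflection principle and the strong Markov property, track the accumulated signs and shifts through the composition of reflections, and apply a global sample-path flip $X^0\mapsto -X^0$ at the end (when $|J|$ has the appropriate parity) to normalize the signs to the form (\ref{eqn3_2}). The only cosmetic difference is the direction of the induction---the paper goes from $g$ to $g{+}1$ by adjoining the rightmost barrier $i_{g+1}$, whereas you peel off the leftmost barrier $\min J$ first---but the resulting composed-reflection bookkeeping and the role of the hypotheses $m_1\ge 0$, $x_{k-1}\le m_k$, $x_k\le m_k$ are identical.
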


\begin{proof}
See Appendix \ref{proofthm0}.
\end{proof}

Now, we apply the multi-step reflection principle to a Brownian motion with nonzero drift in Theorem \ref{thm1}. The result of Theorem \ref{thm1} generalizes ``the generalized reflection princple'' of Brownian motion by \cite{lee2019b} even further by allowing the horizontal barrier levels different from subinterval to subinterval. 

\begin{theorem}
\label{thm1}  
Consider time points of $0=t_0 < t_1 < \cdots <t_{n-1}<t_n = T$ and let $\{X(t): 0 \leq t \leq T \}$ denote a Brownian motion with drift $\mu$ and diffusion coefficient $\sigma$. $M(s,t)$ denotes the maximum of $X$ in the time interval from $s$ to $t$ for $0<s<t \leq T$. Then 
\begin{equation}
\label{eqn3_3}
\begin{split}
Pr(\cap_{i=1}^n \{ X(t_i) \leq x_i \} , &\cap_{\{i \in J\}} \{M(t_{i-1}, t_i) > m_i\})\\
                   &= e^{\frac{2\mu}{\sigma^2} m[n]}  Pr( \cap_{i=1}^n \{s_i X(t_i) +2m[i] \leq x_i \}),
\end{split}
\end{equation}
where $J$, $m[i]$, and $s_i$ are defined as in Theorem \ref{thm0}. 
We also assume $m_1 \geq 0$, $x_{i-1} \leq m_i$ and $x_{i} \leq m_i$ for $i \in J$.
\end{theorem}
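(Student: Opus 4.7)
The plan is to reduce to the zero-drift case of Theorem~\ref{thm0} by an Esscher transform and to propagate the exponential Radon--Nikodym weight through the reflections. Choose Esscher parameter $\tilde h = -\mu/\sigma^2$ so that $X$ becomes a zero-drift Brownian motion $X^0$ under the Esscher measure $Q$. Applying (\ref{expyh}) to the event $S$ on the left-hand side of~(\ref{eqn3_3}) gives
\[
Pr(S) = e^{-\mu^2 T/(2\sigma^2)}\,E^{Q}\!\bigl[I(S)\,e^{\mu X^0(T)/\sigma^2}\bigr].
\]

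Next, apply the sequence of reflections underlying the proof of Theorem~\ref{thm0}. Because each reflection is a measure-preserving involution on zero-drift Brownian paths that cross the relevant barrier, it yields an identity of expectations, not merely of probabilities. A single reflection at level $\ell_k$ acts on the weight via $e^{\alpha X^0(T)/\sigma^2} \mapsto e^{2\alpha \ell_k/\sigma^2}\,e^{-\alpha X^0(T)/\sigma^2}$. After all $|J|$ reflections, the additive shifts telescope (by the recursion defining $m[i]$) into the single constant $e^{2\mu m[n]/\sigma^2}$, while the sign of $X^0(T)$ in the exponent accumulates to $(-1)^{|J|}$. The inside event is simultaneously transformed into $E' = \bigcap_i \{s_i X^0(t_i) + 2m[i] \le x_i\}$ exactly as in Theorem~\ref{thm0}, so
\[
E^{Q}\!\bigl[I(S)\,e^{\mu X^0(T)/\sigma^2}\bigr] = e^{2\mu m[n]/\sigma^2}\,E^{Q}\!\bigl[I(E')\,e^{(-1)^{|J|}\mu X^0(T)/\sigma^2}\bigr].
\]

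To finish, convert the surviving $Q$-expectation back to $P$ by a second Esscher step of parameter $(-1)^{|J|}\mu/\sigma^2$. For $|J|$ even this lands directly in $P$; for $|J|$ odd it lands in the drift-$(-\mu)$ measure, at which point the symmetry $-X|_{-\mu} \stackrel{d}{=} X|_{+\mu}$ returns to $P$ while flipping the sign of every $X(t_i)$ appearing in $E'$. Either way the $e^{\mu^2 T/(2\sigma^2)}$ factor generated at this step cancels the one from Step~1, leaving exactly the prefactor $e^{2\mu m[n]/\sigma^2}$ times a $P$-probability of the form displayed in~(\ref{eqn3_3}).

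The main obstacle is the combinatorial bookkeeping: one must verify that the sign attached to $X(t_i)$ in the final $P$-expression, namely $(-1)^{|\{k\in J:\,k\le i\}|}\cdot(-1)^{|J|\bmod 2}$, equals the $s_i=(-1)^{|\{k\in J:\,k>i\}|}$ of Theorem~\ref{thm0}. This reduces to the parity identity $|J|=|\{k\le i:\,k\in J\}|+|\{k>i:\,k\in J\}|$, and once it is in hand no probabilistic input beyond Theorem~\ref{thm0} and the single-step formula~(\ref{reflection2}) is required.
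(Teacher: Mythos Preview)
Your approach is correct and shares the Esscher-transform backbone of the paper's proof: pass to the zero-drift measure, invoke the reflections of Theorem~\ref{thm0}, then pass back. The execution differs in one instructive way. You push the weight $e^{\mu X^0(T)/\sigma^2}$ through each reflection individually, accumulating the sign $(-1)^{|J|}$ and then disposing of it by a parity case-split and the symmetry $-X|_{-\mu}\stackrel{d}{=}X|_{\mu}$. The paper instead applies Theorem~\ref{thm0} as a black box to the distribution function $x\mapsto Pr\bigl(S\cap\{X^0(T)\le x\}\bigr)$ and changes variable: the weight $e^{(\mu/\sigma^2)x}$ becomes $e^{(\mu/\sigma^2)(s_n X^0(T)+2m[n])}$, and since $s_n=1$ by definition (no element of $J$ exceeds $n$), the factor $e^{2\mu m[n]/\sigma^2}$ drops out immediately with no parity argument at all. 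Your route makes the path-wise mechanism more visible; the paper's $s_n=1$ shortcut is what buys the cleaner bookkeeping.

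One small imprecision worth flagging: after the $|J|$ sequential reflections the event is in the intermediate form with signs $(-1)^{|\{k\in J:\,k\le i\}|}$ (this is the display~(\ref{eveng}) in the paper's proof of Theorem~\ref{thm0}), not yet in the $s_i$-form. Your final paragraph handles this correctly via the parity identity, but the earlier sentence ``transformed into $E'=\bigcap_i\{s_iX^0(t_i)+2m[i]\le x_i\}$ exactly as in Theorem~\ref{thm0}'' is premature---that form only appears after the global sign flip you perform in the odd-$|J|$ branch.
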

\begin{proof} See Appendix \ref{proofthm1}. \end{proof}

\begin{example} 
Let $\{X(t): 0 \leq t \leq T \}$ denote a Brownian motion with drift $\mu$ and diffusion coefficient $\sigma$. $M(s, t)$ is the maximum of $X$ in the interval from $s$ to $t$ for $0<s<t \leq T$.
Here we apply (\ref{eqn3_3}) to the case of $n=2$ with $0<t_1<t_2=T$. 
When $J=\{1\}$, for $m_1 \geq 0$ and $x_1 \leq m_1$,
\begin{align*} %
Pr(X(t_1) & \leq x_1, X(t_2) \leq x_2, M(0,t_1) > m_1)\\
             &=e^{\frac{2 \mu}{\sigma^2}m_1}Pr(X(t_1)+2m_1 \leq x_1, X(t_2)+2m_1 \leq x_2)\\
             &=e^{\frac{2 \mu}{\sigma^2}m_1} \Phi_2 \left(\frac{x_1-2m_1 -\mu t_1}{\sigma \sqrt{t_1}}, \frac{x_2-2m_1-\mu t_2}{\sigma \sqrt{t_2}}; \sqrt{\frac{t_1}{t_2}} \right)
\end{align*} %
and when $J=\{2\}$, for $x_1 \leq m_2$, and $x_2 \leq m_2$,
\begin{align*}
Pr(X(t_1) & \leq x_1, X(t_2) \leq x_2, M(t_1, t_2) > m_2)\\
	&=e^{\frac{2 \mu}{\sigma^2}m_2} Pr(-X(t_1) \leq x_1, X(t_2)+2m_2 \leq x_2) \\
	&=e^{\frac{2 \mu}{\sigma^2}m_2} \Phi_2 \left(\frac{x_1+\mu t_1}{\sigma \sqrt{t_1}}, \frac{x_2-2m_2-\mu t_2}{\sigma \sqrt{t_2}}; -\sqrt{\frac{t_1}{t_2}} \right).
\end{align*}
$\Phi_2(\cdot, \cdot ;\rho)$ is the cumulative distribution function of the 2-dimensional standard normal distribution with correlation coefficient $\rho$. 
The above results correspond to what appeared in \cite{leeko2018}. 
Also, for $J=\{1,2\}$, $m_1 \geq 0$, $x_1 \leq m_1 \wedge m_2$, and $x_2 \leq m_2$, we obtain
\begin{align*}
Pr(X(t_1)  &\leq x_1, X(t_2)  \leq x_2, M(0,t_1) > m_1, M(t_1, t_2)>m_2)\\
              &=e^{\frac{2 \mu}{\sigma^2}(m_2-m_1)} Pr(-X(t_1) +2m_1 \leq x_1, X(t_2)+2(m_2-m_1) \leq x_2) \\
              &=e^{\frac{2 \mu}{\sigma^2}(m_2-m_1)} \Phi_2 \left(\frac{x_1-2m_1 +\mu t_1}{\sigma \sqrt{t_1}}, \frac{x_2-2(m_2-m_1)-\mu t_2}{\sigma \sqrt{t_2}}; -\sqrt{\frac{t_1}{t_2}} \right). 
\end{align*}
\label{ex1}
\end{example}

\begin{example}
Applying (\ref{eqn3_3}) to the case of $n=3$, we obtain the following formulas. 
$\Phi_3(\cdot, \cdot, \cdot ;\rho_{12}, \rho_{13}, \rho_{23})$ given below is the cumulative distribution function of the 3-dimensional standard normal distribution with correlation coefficients, $\rho_{12}$, $\rho_{13}$, and $\rho_{23}$.

When $J=\{ 2\}$, for $x_1 \leq m_2$, and $x_2 \leq m_2$,
\begin{align*}
Pr(&X(t_1) \leq x_1, X(t_2) \leq x_2, X(t_3) \leq x_3, M(t_1, t_2) > m_2)\\
 		&=e^{\frac{2 \mu m_2}{\sigma^2}} Pr(-X(t_1) \leq x_1, X(t_2)+2m_2 \leq x_2, X(t_3)+2m_2 \leq x_3)\\
           &=e^{\frac{2 \mu m_2}{\sigma^2}} \Phi_3 \left( \frac{x_1+\mu t_1}{\sigma \sqrt{t_1}}, \frac{x_2-2m_2-\mu t_2}{\sigma \sqrt{t_2}}, 
                   \frac{x_3 -2m_2-\mu t_3}{\sigma \sqrt{t_3}}; -\sqrt{\frac{t_1}{t_2}}, -\sqrt{\frac{t_1}{t_3}}, \sqrt{\frac{t_2}{t_3}} \right),
\end{align*}
when $J=\{2,3\}$, for $x_1 \leq m_2$, $x_2 \leq m_2 \wedge m_3$, and $x_3 \leq m_3$,
\begin{align*}
Pr(&X(t_1) \leq x_1, X(t_2) \leq x_2, X(t_3) \leq x_3, M(t_1,t_2) >m_2, M(t_2, t_3) > m_3)\\
		&=e^{\frac{2 \mu}{\sigma^2}(m_3-m_2)} Pr(X(t_1) \leq x_1, -X(t_2)+2m_2 \leq x_2, X(t_3)+2(m_3-m_2) \leq x_3)\\
           &=e^{\frac{2 \mu}{\sigma^2}(m_3-m_2)} \Phi_3 \left( \frac{x_1 -\mu t_1}{\sigma \sqrt{t_1}}, \frac{x_2 -2m_2+\mu t_2}{\sigma \sqrt{t_2}}, \right.\\
          &\hspace{3cm} \left. \frac{x_3 -2(m_3-m_2)-\mu t_3}{\sigma \sqrt{t_3}}; -\sqrt{\frac{t_1}{t_2}}, \sqrt{\frac{t_1}{t_3}}, -\sqrt{\frac{t_2}{t_3}} \right),
\end{align*}
and when $J=\{1,2,3\}$, for $m_1 \geq 0$, $x_1 \leq m_1 \wedge m_2$, $x_2 \leq m_2 \wedge m_3$, and $x_3 \leq m_3$,
\begin{align*}
Pr(&X(t_1) \leq x_1, X(t_2) \leq x_2, X(t_3) \leq x_3, M(0, t_1) > m_1, M(t_1, t_2) > m_2, M(t_2, t_3) > m_3 )\\
                    &=e^{\frac{2 \mu}{\sigma^2}(m_3-m_2+m_1)} Pr(X(t_1) +2m_1 \leq x_1,\\
                    &\qquad \qquad \qquad -X(t_2)+ 2(m_2 -m_1) \leq x_2, X(t_3) +2(m_3-m_2+m_1) \leq x_3)\\
                    &= e^{\frac{2 \mu}{\sigma^2}(m_3-m_2+m_1)} \Phi_3 \left( \frac{x_1-2m_1 -\mu t_1}{\sigma \sqrt{t_1}}, \frac{x_2-2(m_2-m_1)+\mu t_2}{\sigma \sqrt{t_2}}, \right.\\
                    &\hspace{3.5cm} \left. \frac{x_3 -2(m_3-m_2 +m_1)-\mu t_3}{\sigma \sqrt{t_3}}; -\sqrt{\frac{t_1}{t_2}}, \sqrt{\frac{t_1}{t_3}}, -\sqrt{\frac{t_2}{t_3}} \right).
\end{align*}
These formulas coincide with what \cite{lee2019a} has calculated.
\label{ex2}
\end{example}
\newpage

\begin{example}
Applying (\ref{eqn3_3}) to the case of $m_i=m$ for all $i$, we obtain 
\begin{align*}
Pr(\cap_{i=1}^n  \{X(t_i) \leq x_i\}, & \cap_{\{i \in J\}} \{M(t_{i-1}, t_i) > m\})\\
                   &= e^{\frac{2\mu}{\sigma^2} m[n]}  Pr(\cap_{i=1}^n \{s_i X(t_i) +2m[i] \leq x_i\}),
\end{align*}
which is the same as Theorem 1 in \cite{lee2019b}. It is easy to see that $m[k]$ and $s_k$ in this paper are the same as $m_k$ and $a_k$ for $k=1, \cdots, n$ in \cite{lee2019b}, respectively. 
\label{ex3}
\end{example}

The following proposition is useful in computing option prices with multi-step barriers. The proof is trivial by the inclusion-exclusion identity of the probability.
\medskip
\begin{proposition}\label{prop1}
Let us consider an arbitrary subset, $I$, of $\{1, \cdots, n\}$ and let $\{X(t): 0 \leq t \leq T \}$ denote a Brownian motion with drift $\mu$ and diffusion coefficient $\sigma$. Then with $0=t_0 < t_1 < \cdots <t_{n-1}<t_n =  T$ and $m_1 \geq 0$,
\begin{equation}
\label{prop1eqn1}
\begin{split}
Pr(& \cap_{i=1}^n \{X(t_i) \leq x_i\}, \cap_{i \in I} \{M(t_{i-1}, t_i) \leq m_i\})\\
	&=Pr( \cap_{i=1}^n \{X(t_i) \leq x_i\}) - Pr(\cap_{i=1}^n \{X(t_i) \leq x_i\}, \cup_{i \in I} \{ M(t_{i-1}, t_i) > m_i \} )\\
	&=\sum_{J \subset I} (-1)^{|J|} Pr(\cap_{i=1}^n \{X(t_i) \leq x_i\}, \cap_{j \in J} \{ M(t_{j-1}, t_j) > m_j \} )\\
	&=\sum_{J \subset I} (-1)^{|J|} e^{\frac{2\mu}{\sigma^2} m^J[n]}Pr(\cap_{i=1}^n \{s^J_i X(t_i)+2m^J[i] \leq x_i \}). 
\end{split}
\end{equation}
Note that $J$ is any subset of $I$ and $|J|$ is the cardinality of the set $J$. $m^J[i]$ and $s^J_i$ are defined as (\ref{eqn3_1}) and (\ref{eqn3_2}) and we put the superscript $J$ here to emphasize the fact that their values are dependent on the index set $J$.
\end{proposition}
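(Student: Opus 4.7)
The plan is to combine three elementary moves: complementation to convert each ``$M \leq m$'' event into the complement of an ``$M>m$'' event, inclusion--exclusion to expand the probability of the resulting union, and Theorem \ref{thm1} to evaluate each summand in closed form.

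First, set $A := \cap_{i=1}^n \{X(t_i) \leq x_i\}$ and $B_i := \{M(t_{i-1}, t_i) > m_i\}$ for $i \in I$. The first equality in (\ref{prop1eqn1}) is then a pure set identity: $\cap_{i \in I}\{M(t_{i-1}, t_i) \leq m_i\} = \bigl(\cup_{i \in I} B_i\bigr)^{c}$, so $\Pr(A \cap (\cup_{i \in I} B_i)^{c}) = \Pr(A) - \Pr(A \cap \cup_{i \in I} B_i)$.

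For the second equality I would apply the standard inclusion--exclusion formula to the union on the right, giving $\Pr(A, \cup_{i \in I} B_i) = \sum_{\emptyset \ne J \subseteq I} (-1)^{|J|+1}\,\Pr(A, \cap_{j \in J} B_j)$. Subtracting this from $\Pr(A)$ and absorbing $\Pr(A)$ itself as the $J=\emptyset$ term (with coefficient $(-1)^{0}=1$) yields $\sum_{J \subseteq I}(-1)^{|J|}\,\Pr(A, \cap_{j \in J} B_j)$, as desired. For the third equality, I would invoke Theorem \ref{thm1} term by term: the probability $\Pr(A, \cap_{j \in J} B_j)$ is exactly the left-hand side of (\ref{eqn3_3}) with index set $J$, so it equals $e^{(2\mu/\sigma^{2}) m^{J}[n]}\,\Pr(\cap_{i=1}^n \{s_i^{J} X(t_i) + 2m^{J}[i] \leq x_i\})$. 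The superscript $J$ simply reminds us that the recursion defining the sequence $m^{J}[i]$ and the signs $s_i^{J}$ depends on the chosen subset; summing over $J$ finishes the argument.

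The one point requiring care, rather than a serious obstacle, is the side condition of Theorem \ref{thm1} that $x_{i-1} \leq m_i$ and $x_i \leq m_i$ for $i \in J$, because the identity must hold for every $J \subseteq I$ simultaneously. I expect this is dispatched by the observation that on the event $\{M(t_{i-1},t_i)>m_i\}$ one may replace $x_{i-1}$ and $x_i$ by $x_{i-1}\wedge m_i$ and $x_i\wedge m_i$ without altering the joint event, so each summand in the expansion can be legitimately rewritten in a form where Theorem \ref{thm1} applies. Beyond that bookkeeping, the proposition is a direct consequence of inclusion--exclusion.
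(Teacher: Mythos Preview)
Your approach is exactly the paper's: the authors simply state that the proof ``is trivial by the inclusion-exclusion identity of the probability,'' i.e., complementation, inclusion--exclusion over $J\subseteq I$, and then Theorem~\ref{thm1} applied term by term.

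One caveat on your side-condition remark: the proposed fix is not quite right, since on $\{M(t_{i-1},t_i)>m_i\}$ one cannot in general replace $x_i$ by $x_i\wedge m_i$ without changing the event (consider $X(t_i)\in(m_i,x_i)$, which forces $M(t_{i-1},t_i)>m_i$ yet violates $X(t_i)\le m_i$). The resolution is simpler: the Proposition tacitly inherits the constraints $x_{i-1}\le m_i$ and $x_i\le m_i$ for all $i\in I$ from Theorem~\ref{thm1}, and every application in the paper (Corollary~\ref{coro1}, the pricing formulas in Tables~\ref{uppricetable}--\ref{downpricetable}) chooses the $x_i$ so that these hold.
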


In the following corollary, we obtain the probability regarding the case without icicles by setting $x_i=m_i \wedge m_{i+1}$ for $i=1, \cdots, n-1$ and $x_n=m_n$.
\begin{corollary}
\label{coro1}
We consider the case where every subinterval has a horizontal barrier and let $\{X(t): 0 \leq t \leq T \}$ denote a Brownian motion with drift $\mu$ and diffusion coefficient $\sigma$.
Then for $m_1 \geq 0$,
\begin{equation}
\label{coro1eqn1}
\begin{split}
&Pr(\cap_{i=1}^n \{M(t_{i-1}, t_i) < m_i\})\\
	&=Pr(X(t_1) \leq m_1 \wedge m_2, \cdots, X(t_{n-1}) \leq m_{n-1} \wedge m_n, X(t_n) \leq m_n, \cap_{i=1}^n \{M(t_{i-1}, t_i) < m_i\})\\
	&=\sum_{J \subset \{1, \cdots, n\}} (-1)^{|J|} Pr(X(t_1) \leq m_1 \wedge m_2 , \cdots, X(t_{n-1}) \leq m_{n-1} \wedge m_n, X(t_n) \leq m_n, \\
	&\hspace{3cm} \cap_{j \in J} \{ M(t_{j-1}, t_j) > m_j \} ). 
\end{split}
\end{equation}
\end{corollary}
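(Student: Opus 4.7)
The plan is that Corollary \ref{coro1} is a direct specialization of Proposition \ref{prop1} to the barrier-only (no icicle) setting.

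First I would verify the first equality in \eqref{coro1eqn1}, which simply amounts to adding redundant $X(t_i)$ constraints to the original event. On the event $\cap_{i=1}^n \{M(t_{i-1}, t_i) < m_i\}$ one automatically has $X(t_i) \leq M(t_{i-1}, t_i) < m_i$ for every $i$, and additionally $X(t_i) \leq M(t_i, t_{i+1}) < m_{i+1}$ when $i < n$. Combining the two bounds gives $X(t_i) < m_i \wedge m_{i+1}$ for $i = 1, \ldots, n-1$ and $X(t_n) < m_n$, so intersecting the original event with these $X(t_i)$ events does not change the probability.

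For the second equality I would invoke Proposition \ref{prop1} with index set $I = \{1, \ldots, n\}$ and with $x_i = m_i \wedge m_{i+1}$ for $i < n$ and $x_n = m_n$. The inclusion--exclusion step in \eqref{prop1eqn1} (its second line), applied to the union $\cup_{i \in I} \{M(t_{i-1}, t_i) > m_i\}$, delivers the displayed sum at once: the summand for $J = \emptyset$ reproduces $Pr(\cap_i \{X(t_i) \leq x_i\})$, and the remaining summands are exactly what inclusion--exclusion contributes for the union.

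Because Proposition \ref{prop1} already carries out the substantive inclusion--exclusion, no genuine obstacle remains; the proof is essentially bookkeeping. The one point worth noting, for subsequent use of the formula in option-pricing calculations, is that each summand indexed by $J$ satisfies $x_{j-1} = m_{j-1} \wedge m_j \leq m_j$ and $x_j = m_j \wedge m_{j+1} \leq m_j$ for every $j \in J$ (with $x_n = m_n$), so that Theorem \ref{thm1} may be applied term-by-term on each summand to produce a closed-form expression in multivariate normal distribution functions.
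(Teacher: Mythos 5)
Your proposal is correct and matches the paper's own route: the corollary is obtained precisely by specializing Proposition \ref{prop1} with $I=\{1,\dots,n\}$, $x_i=m_i\wedge m_{i+1}$ for $i<n$ and $x_n=m_n$, after noting that these endpoint constraints are implied by the event $\cap_{i=1}^n\{M(t_{i-1},t_i)<m_i\}$ and hence redundant. Your closing remark that each summand then satisfies the hypotheses $x_{j-1}\leq m_j$ and $x_j\leq m_j$ of Theorem \ref{thm1} is exactly the bookkeeping the paper relies on when using the corollary for pricing.
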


In the following remark, we observe the hierarchical structure contained in Theorem \ref{thm1} in terms of $J$, the set of intervals with partial barriers. 
\begin{remark}
\label{rem1}
Let $J  \subset J^*$ be index sets that indicate which time intervals have horizontal barriers. Then
\begin{equation}
\label{rem1eqn1}
\begin{split}
Pr(&\cap_{i=1}^n \{X(t_i) \leq x_i\}, \cap_{\{i \in J^*\}} \{M(t_{i-1}, t_i) > m_i\})\\
     & \leq Pr(\cap_{i=1}^n \{X(t_i) \leq x_i\}, \cap_{\{i \in J\}} \{M(t_{i-1}, t_i) > m_i\}).
\end{split}
\end{equation}
This seemingly trivial relationship can be useful in numerical computations. Suppose there are $n$ subintervals and $n$ horizontal barriers, and we want to calculate the probability, $Pr(\cap_{i=1}^n \{X(t_i) \leq x_i\}, \cap_{i=1}^n \{M(t_{i-1}, t_i) \leq m_i\})$. In order to calculate the probability of interest, we need to find $2^n$ probabilities in the form of $Pr(\cap_{i=1}^n \{X(t_i) \leq x_i\}, \cap_{j \in J} \{M(t_{j-1}, t_j) > m_j\})$ as in Proposition \ref{prop1}. Using the hierarchical relationship among such probabilities, if we find a ``negligible'' probability in a certain step, we can ignore the probability after that step in the hierarchical structure. As an example, let us consider a case of $n=3$ where we can think of 8 different sets of $J$'s; $\varnothing, \{1\}, \{2\}, \{3\}, \{1,2\}, \cdots, \{1,2,3\}$. Figure \ref{hier1} shows the hierarchical structure of (\ref{rem1eqn1}) according to the cardinality of $J$. For instance, (\ref{rem1eqn1}) with $J=\{1,2\}$ and $J^*=\{1,2,3\}$ is $0.0084 > 0.0008$. If $J$ provides a negligible probability, then $J^* \supset J$, appeared under $J$ in the figure, does not have to be considered at all. 
Here we assume $\mu=0.01$, $\sigma=0.2$, $t_1=2/12$, $t_2=4/12$, $t_3=T=6/12$, $m_1=\log 1.1$, $m_2=\log 1.2$, and $m_3=\log 1.3$. The probabilities in the boxes of Figure \ref{hier1} are obtained by (\ref{coro1eqn1}) since we assume no icicles.

\end{remark}

\begin{figure}[H]
\centering
\includegraphics[width=0.55\textwidth]{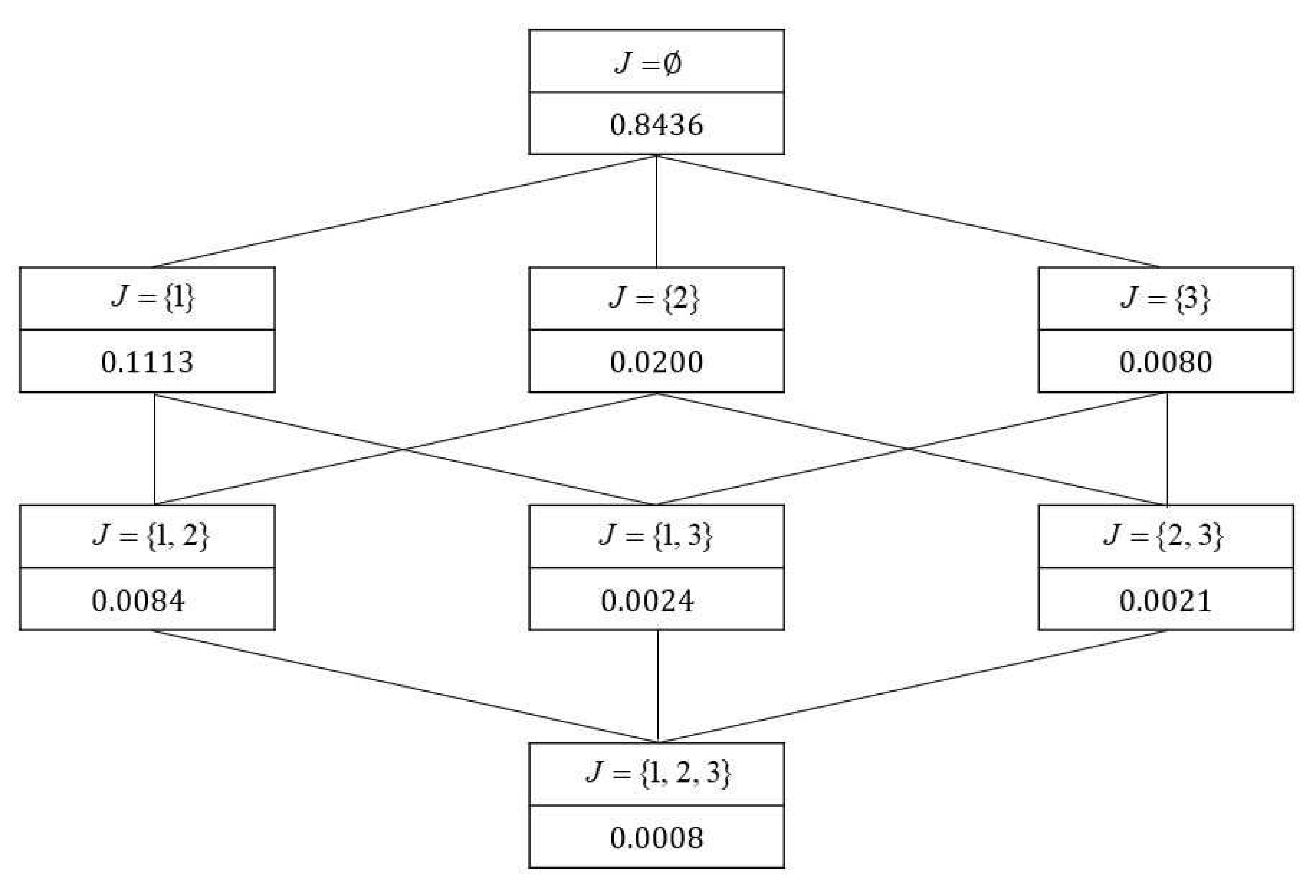}
\caption[hierarchical structure]{Hierarchical structure in Remark \ref{rem1} with $n=3$}
\label{hier1}
\end{figure}

\section{Pricing multi-step barrier options}\label{pricing}

Let us denote by $T$ the option maturity, and consider time points of $0 =t_0 < t_1 < \cdots <t_n = T$. The horizontal and icicled barrier levels in the interval $[t_{i-1}, t_i]$ are denoted as $B_i$ and $L_i$, respectively, for $i=1, \cdots, n$.
$I=\{i_1, i_2, \cdots, i_g\}$ is defined as in Proposition \ref{prop1} and would often be set as $\{1, 2, \cdots, n\}$ in practice. 
To simplify the triggering events, we define 
$$A_u:=\{S(t_1) \leq L_1 , \cdots, S(t_n) \leq L_n, \bigcap_{i \in I} \{\max \{ S(\tau): t_{i-1} \leq \tau \leq t_i \} \leq B_i\} \}$$
and 
$$A_d:=\{S(t_1) \geq L_1 , \cdots, S(t_n) \geq L_n, \bigcap_{i \in I} \{ \min \{ S(\tau): t_{i-1} \leq \tau \leq t_i \} \geq B_i \} \}.$$ Or, with $x_i = \ln(L_i /S(0))$ and $m_i=\ln(B_i/S(0))$, they can be equivalently rewritten as 
$$A_u:=\{X(t_1) \leq x_1 , \cdots, X(t_n) \leq x_n, \bigcap_{i \in I} \{ M(t_{i-1}, t_i) \leq m_i \} \}$$
and 
$$A_d:=\{X(t_1) \geq x_1 , \cdots, X(t_n) \geq x_n, \bigcap_{i \in I} \{m(t_{i-1}, t_i)  \geq m_i \}\}$$
where $m(s,t)=\min \{ X(\tau): s \leq \tau \leq t\}.$
For the strike price $K$, icicled multi-step barrier options and their corresponding payoff structures are summarized in Table \ref{payofftable}.  $A_u^c$ and $A_d^c$ are complements of the sets $A_u$ and $A_d$, respectively.

\begin{table}[hbt!]
\small
\centering
\caption{Icicled multi-step barrier options and their payoffs}
\label{payofftable}
\begin{tabular}{|c|c||c|c|}
\hiderowcolors
\hline
 Option type & Payoff & Option type & Payoff\\
 \hline
 Up and Out Put (UOP) & $(K-S(T))^+ I(A_u)$ &Up and In Put (UIP) & $(K-S(T))^+ I({A_u^c})$\\
 Up and Out Call (UOC) &$(S(T)-K)^+ I(A_u)$ & Up and In Call (UIC) & $(S(T)-K)^+ I(A_u^c)$\\
 \hline
  Down and Out Put (DOP) & $(K-S(T))^+ I(A_d)$ & Down and In Put (DIP) & $(K-S(T))^+ I({A_d^c})$\\
 Down and Out Call (DOC) &$(S(T)-K)^+ I(A_d)$ & Down and In Call (DIC) & $(S(T)-K)^+ I(A_d^c)$\\
 \hline
\end{tabular}
\end{table}

To obtain the option prices in a compact form, we write the probabilities associated with the events $A_u$ and $A_d$ as
\begin{align*}
PA_u(\mu, x_1, \cdots, x_n, m_{i_1}, \cdots, m_{i_g})&= Pr(X(t_1) \leq x_1 , \cdots, X(t_n) \leq x_n, \bigcap_{i \in I} \{ M(t_{i-1}, t_i) \leq m_i \} )\\
  &=\sum_{J \subset I} (-1)^{|J|} e^{\frac{2\mu}{\sigma^2}m^J[n]} Pr(\cap_{i=1}^n \{s^J_i X(t_i) +2m^J[i] \leq x_i\})
\end{align*}
and 
$$PA_d(\mu, x_1, \cdots, x_n, m_{i_1}, \cdots, m_{i_g})= Pr(X(t_1) \geq x_1, \cdots, X(t_n) \geq x_n, \bigcap_{i \in I} \{ m(t_{i-1}, t_i) \geq m_i \} )$$
respectively. 
Here the drift $\mu$ is explicitly specified with the functions $PA_u$ and $PA_d$ for the later measure change. Note that $PA_u$ is exactly the same as the probability given in Proposition \ref{prop1} and $PA_d$ is easily obtained through 
$$PA_d(\mu, x_1, \cdots, x_n, m_{i_1}, \cdots, m_{i_g})=PA_u(-\mu, -x_1, \cdots, -x_n, -m_{i_1}, \cdots, -m_{i_g}).$$

By the fundamental theorem of asset pricing and the factorization formula, the option pricing formulas are immediately obtained using the expectation of the discounted payoffs under the Esscher measure of parameter ${h}^{*}$ given in Section \ref{prelim}. For instance, the up-and-out icicled multi-step barrier put (UOP) price at time 0 is calculated as    
\begin{align*}
E & \left[e^{-rT} (K-S(T))^+ I(A_u);h^* \right]\\
		&= Ke^{-rT}Pr(A_u \cap (S(T)<K);h^*)-S(0)Pr(A_u \cap (S(T)<K); h^*+1)\\
		&=Ke^{-rT}PA_u \left(r-\frac{1}{2} \sigma^2, x_1, \cdots, x_n \wedge k, m_{i_1}, \cdots, m_{i_g}\right)\\
                &\qquad -S(0)PA_u\left(r+\frac{1}{2} \sigma^2, x_1, \cdots, x_n \wedge k, m_{i_1}, \cdots, m_{i_g}\right),
\end{align*}
where $k=\ln(K/S(0))$ and $a \wedge b = \min \{a, b\}$. 
Similarly, one may find the other option prices in Tables \ref{uppricetable} and \ref{downpricetable}. 
When constructing the tables, one needs to be cautious about the range of $K$.
For instance, if $K \geq L_n$, the UOC payoff would always become zero so that it is unnecessary to consider the case.   
Note that similar considerations should be taken for UIC, DOP and DIP prices. Setting $L_i = B_i \wedge B_{i+1}$, or equivalently, $x_i=m_i \wedge m_{i+1}$ for $i=1, \cdots, n-1$ and $L_n=B_n$ or $x_n=m_n$ would give rise to the multi-step barrier option prices without icicles. Corollary \ref{coro1} is used to find such prices. Also, setting $m_i=m$ for all $i$ gives the result of \cite{lee2019b}. 

\begin{example}\label{exkorn523}
Consider a down-and-out call option with ordinary barrier at $B_1$. We assume that the option expires in $T$, the strike price is $K$, and the interest rate is $r$. Then the ordinary down-and-out call option price is calculated using our method as follows.
\begin{align}
E & \left[e^{-rT} (S(T)-K)^+ I(\min_{0 \leq t\leq T} S(t) > B_1);h^* \right] \nonumber \\
   &=E\left[e^{-rT} (S(T)-K) I(S(T)>K, \min_{0 \leq t\leq T} S(t) > B_1);h^* \right] \nonumber \\
		&= e^{-rT}S(0)E\left[e^{X(T)}I(X(T)>k, m(0,T) >m_1); h^*\right]-Ke^{-rT}Pr(X(T)>k, m(0,T)>m_1;h^*) \nonumber \\
		&=S(0)Pr(X(T)>k, m(0,T) >m_1; h^*+1)-Ke^{-rT}Pr(X(T)>k, m(0,T)>m_1;h^*),\label{ordinary1}
\end{align}
where $k=\ln(K/S(0))$ and $m_1=\ln(B_1/S(0))$. Also,
\begin{align}
Pr(&X(T)>k, m(0,T) >m_1; h^*)=Pr(-X(T)<-k, -m(0,T) <-m_1; h^*) \nonumber \\
   &=Pr(Y(T)<-k, M_Y(0,T) <-m_1;h^*)\nonumber \\
   &=Pr(Y(T)<(-k) \wedge (-m_1), M_Y(0,T) <-m_1; h^*) \nonumber \\
   &=Pr(Y(T)<(-k) \wedge (-m_1);h^*)-Pr(Y(T)<(-k) \wedge (-m_1), M_Y(0,T) >-m_1; h^*), \label{ordinary2}
\end{align}
where $Y(t)=-X(t)$ and $M_Y(0,T)$ is the $\max_{0 \leq t \leq T} Y(t)$. Then using (\ref{reflection2}) and (\ref{ordinary2}), Equation (\ref{ordinary1}) turns to
\begin{align*}
&S(0)\Phi\left(\frac{(-k) \wedge (-m_1)+(r+\sigma^2/2)T}{\sigma \sqrt{T}}\right)-S(0)e^{\left(\frac{2r}{\sigma^2}+1\right)m_1}\Phi\left( \frac{(-k) \wedge (-m_1)+2m_1+(r+\sigma^2/2)T}{\sigma \sqrt{T}} \right)\\
&-Ke^{-rT} \Phi\left(\frac{(-k) \wedge (-m_1)+(r-\sigma^2/2)T}{\sigma \sqrt{T}}\right)+Ke^{-rT}e^{\left(\frac{2r}{\sigma^2}-1\right)m_1}\Phi\left( \frac{(-k) \wedge (-m_1)+2m_1+(r-\sigma^2/2)T}{\sigma \sqrt{T}} \right).
\end{align*}
It is easy to check that the above formula is the same as the widely known formula; for example, one given in \cite{haug2007}. 
 
\end{example} 

\begin{table}[hbt!]
\small
\centering
\caption{Pricing formulas for icicled step barrier options: Up-barrier cases}
\label{uppricetable}
\begin{tabular}{|c|c|}
\hiderowcolors
\hline
 Option & Price\\
 \hline
 UOP & $Ke^{-rT}PA_u(r-\frac{\sigma^2}{2},  x_1, \cdots, x_n \wedge k, m_{i_1}, \cdots, m_{i_g})$\\
 	&$-S(0)PA_u(r+\frac{\sigma^2}{2}, x_1, \cdots, x_n \wedge k, m_{i_1}, \cdots, m_{i_g})$\\
 \hline
 UIP & $Ke^{-rT}[\Phi(\frac{k-(r-\frac{\sigma^2}{2})t_n}{\sigma \sqrt{t_n}})-PA_u(r-\frac{\sigma^2}{2}, x_1, \cdots, x_n \wedge k, m_{i_1}, \cdots, m_{i_g})]$\\
 &$-S(0)[\Phi(\frac{k-(r+\frac{\sigma^2}{2})t_n}{\sigma \sqrt{t_n}})-PA_u(r+\frac{\sigma^2}{2}, x_1, \cdots, x_n \wedge k, m_{i_1}, \cdots, m_{i_g})]$\\
 \hline
 UOC &$S(0)[PA_u(r+\frac{\sigma^2}{2}, x_1, \cdots, x_{n-1}, x_n, m_{i_1}, \cdots, m_{i_g})$\\
         &$-PA_u(r+\frac{\sigma^2}{2}, x_1, \cdots, x_{n-1}, k, m_{i_1}, \cdots, m_{i_g})]$\\
 ($K<L_n)$& $ -Ke^{-rT}[PA_u(r-\frac{\sigma^2}{2}, x_1, \cdots, x_{n-1}, x_n, m_{i_1}, \cdots, m_{i_g})$\\
                &$-PA_u(r-\frac{\sigma^2}{2}, x_1, \cdots, x_{n-1}, k, m_{i_1}, \cdots, m_{i_g})]$\\
 \hline
 UIC & $S(0)[\Phi(-\frac{k-(r+\frac{\sigma^2}{2})t_n}{\sigma \sqrt{t_n}})-PA_u( r+\frac{\sigma^2}{2}, x_1, \cdots, x_{n-1}, x_n, m_{i_1}, \cdots, m_{i_g})$\\
 ($K<L_n$) &$+ PA_u(r+\frac{\sigma^2}{2}, x_1, \cdots, x_{n-1}, k, m_{i_1}, \cdots, m_{i_g}) ]$\\
 		&$-Ke^{-rT}[\Phi(-\frac{k-(r-\frac{\sigma^2}{2})t_n}{\sigma \sqrt{t_n}})-PA_u(r-\frac{\sigma^2}{2}, x_1, \cdots, x_{n-1}, x_n, m_{i_1}, \cdots, m_{i_g})$\\
 &$+PA_u(r-\frac{\sigma^2}{2}, x_1, \cdots, x_{n-1}, k, m_{i_1}, \cdots, m_{i_g})]$\\
 \hline
 UIC & \multirow{ 2}{*}{$S(0)\Phi(-\frac{k-(r+\frac{\sigma^2}{2})t_n}{\sigma \sqrt{t_n}})-Ke^{-rT}\Phi(-\frac{k-(r-\frac{\sigma^2}{2})t_n}{\sigma \sqrt{t_n}})$}\\
 ($K>L_n$) & \\
 \hline
\end{tabular}
\end{table}
\bigskip
\begin{table}[hbt!]
\small
\centering
\caption{Pricing formulas for icicled step barrier options: Down-barrier cases}
\label{downpricetable}
\begin{tabular}{|c|c|}
\hiderowcolors
\hline
 Option & Price\\
 \hline
 DOC & $S(0)PA_u(-r-\frac{\sigma^2}{2}, -x_1, \cdots, (-x_n) \wedge (-k), -m_{i_1}, \cdots, -m_{i_g})$\\
                             & $-Ke^{-rT}PA_u(-r+\frac{\sigma^2}{2},  -x_1, \cdots, (-x_n) \wedge (-k), -m_{i_1}, \cdots, -m_{i_g})$\\
 \hline
 DIC & $S(0)[\Phi(\frac{-k+(r+\frac{\sigma^2}{2})t_n}{\sigma \sqrt{t_n}})-PA_u(-r-\frac{\sigma^2}{2}, -x_1, \cdots, (-x_n) \wedge (-k), -m_{i_1}, \cdots, -m_{i_g})]$\\
     &$-Ke^{-rT}[\Phi(\frac{-k+(r-\frac{\sigma^2}{2})t_n}{\sigma \sqrt{t_n}})-PA_u(-r+\frac{\sigma^2}{2}, -x_1, \cdots, (-x_n) \wedge (-k), -m_{i_1}, \cdots, -m_{i_g})]$\\
  \hline
 DOP & $ Ke^{-rT}[PA_u(-r+\frac{\sigma^2}{2}, -x_1, \cdots, -x_{n-1}, -x_n, -m_{i_1}, \cdots, -m_{i_g})$\\
 				&$-PA_u(-r+\frac{\sigma^2}{2}, -x_1, \cdots, -x_{n-1}, -k, -m_{i_1}, \cdots, -m_{i_g})]$\\
     ($K>L_n)$   &$-S(0)[PA_u(-r-\frac{\sigma^2}{2}, -x_1, \cdots, -x_{n-1}, -x_n, -m_{i_1}, \cdots, -m_{i_g})$\\
                       &$-PA_u(-r-\frac{\sigma^2}{2}, -x_1, \cdots, -x_{n-1}, -k, -m_{i_1}, \cdots, -m_{i_g})]$\\
  \hline
 DIP & $Ke^{-rT}[\Phi(\frac{k-(r-\frac{\sigma^2}{2})t_n}{\sigma \sqrt{t_n}})-PA_u(-r+\frac{\sigma^2}{2}, -x_1, \cdots, -x_{n-1}, -x_n, -m_{i_1}, \cdots, -m_{i_g})$\\
               &$+PA_u(-r+\frac{\sigma^2}{2}, -x_1, \cdots, -x_{n-1}, -k, -m_{i_1}, \cdots, -m_{i_g})]-S(0)[\Phi(\frac{k-(r+\frac{\sigma^2}{2})t_n}{\sigma \sqrt{t_n}})$\\
   ($K>L_n$)&$ -PA_u( -r-\frac{\sigma^2}{2}, -x_1, \cdots, -x_{n-1}, -x_n, -m_{i_1}, \cdots, -m_{i_g})$\\
           &$+ PA_u(-r-\frac{\sigma^2}{2}, -x_1, \cdots, -x_{n-1}, -k, -m_{i_1}, \cdots, -m_{i_g}) ]$\\
 \hline
 DIP & \multirow{ 2}{*}{$Ke^{-rT}\Phi(\frac{k-(r-\frac{\sigma^2}{2})t_n}{\sigma \sqrt{t_n}})-S(0)\Phi(\frac{k-(r+\frac{\sigma^2}{2})t_n}{\sigma \sqrt{t_n}})$}\\
 ($K<L_n$) & \\
 \hline
\end{tabular}
\end{table}

Table \ref{upoptions} presents the multi-step up-barrier option prices for various input parameters such as the interest rate, the strike price, and the volatility of the underlying asset. The option maturity is assumed to be a half years ($T=0.5$), and the barrier levels are changeable at the end of every month($t_1=1/12, t_2=2/12, \cdots, t_6=T=6/12$). We assume $S(0)=100$ and leave icicles out of the consideration for a simpler illustration. Three types of barrier levels are considered; Type 1 has an increasing pattern, Type 2 has a decreasing pattern, and Type 3 has an increasing and then decreasing pattern as shown in Figure \ref{upbarriertype}. Columns of UOC, UIC, UOP, and UIP provide the corresponding multi-step barrier option prices; for instance, UOC is the up and out multi-step call option price.

The overall sensitivity behaviors of UIC, UOP, and UIP option prices with respect to the interest rate, the volatility, or the strike price do not differ from those of standard call and put option prices. 
UOC option prices also behave similarly as other options with respect to the strike price but they do not increase as the interest rate increases, although the price difference is very small. Also, with respect to the volatility, they move in the opposite direction compared to the standard call option. When $\sigma$ gets larger, option prices generally get larger since the underlying price movement becomes wider, but in case of UOC, a larger value of $\sigma$ will increase the probability of being knocked out which seems to have a bigger impact on the price. The same reasoning can be applied to explain the phenomenon that UOP prices with barrier Types 1 and 3 do not increase as $\sigma$ increases when $K=110$.
Note also that UOC option prices in Table \ref{upoptions} are all zero for Type 3 barrier with $K=110$. The up-barrier level with Type 3 is equal to the strike price at the last subinterval, so the payoff of the UOC option will be automatically 0 which makes the price equal to 0. 
 
Table \ref{k100table_BB_RMS} compares the exact option prices calculated from the formula obtained in this section with the results of Monte Carlo(MC) simulation using Brownian bridge technique. Since Monte Carlo simulation checks if the underlying asset price hits the barriers only at discrete time points, the simulated price from basic MC simulation methods is systematically biased. This bias is sometimes called as monitoring bias, which is of order $O(M^{-1/2})$ where $M$ is the number of discretization points(See \cite{gobet2009}). \cite{korn2010} suggests filling the gap between discretization points using a Brownian bridge instead of the linear interpolation in order to remove the monitoring bias. We followed the Brownian bridge technique given in \cite{korn2010} for our setting. In each cell of Table \ref{k100table_BB_RMS}, four values are shown; the first value is the multi-step barrier option price calculated from the formula, the second value in the curly brackets is the MC estimated option price, the third value in parentheses is the standard error (SE) of the repeated MC experiments, and the fourth value in squared brackets is the absolute error, which is the model price using multi-step probabilities subtracted from the MC estimated price. In general, SEs and absolute errors of MC estimates are very small, although there are some random simulation errors left. The MC simulation used 1,000,000 sample paths. We also calculate the root mean square (RMS) relative error which is defined as
	$$ \mbox{RMS relative error }=\sqrt{\frac{1}{m} \sum_{i=1}^m (C_i-\tilde{C}_i)^2/C_i^2},$$
by \cite{glasserman2004}. Here, $m$ is the number of cells in each table, $C_i$ is the price by the explicit option pricing formula for the $i$th cell, and $\tilde{C}_i$ is the estimated price using Monte Carlo simulation for the $i$th cell.  The RMS relative error of the Monte Carlo simulation for up-barrier options in Table \ref{k100table_BB_RMS} is 0.0035(0.35\%). A usage of RMS relative error can be also found in \cite{abg1998}.


The overall behaviors of down-barrier options in Table \ref{downoptions} are very similar to up-barrier options. In general, the movement patterns of DOC, DIC, and DIP option prices with respect to the interest rate, the volatility, or the strike price do not differ from those of standard option prices. But DOP option prices move differently to some extent. DOP option prices  with respect to the volatility move in the opposite direction compared to the standard call option. As with UOC cases, a larger value of $\sigma$ will increase the probability of being knocked out which seems to have a bigger impact on the price so that the option prices decrease as $\sigma$ increases. The same reasoning can be applied to explain the decrease in DOC prices with Type 2 and Type 3 barriers when $K=90$. DOP behaves similarly as other options with respect to the strike price.

Note that DOP prices with Type 3 barrier when $K=90$ are all zero, since the option must be killed before the expiration for the option payoff, $(90-S(T))^+$ to be greater than 0. Another thing to note is that Type 3 barrier is always equal to or higher than Type 2 barrier in this setting. Therefore, down-and-in options with Type 3 barrier are more expensive than those with Type 2 barrier and down-and-out options with Type 2 are more expensive than those with Type 3. 

Table \ref{k100dtable_BB} compares the exact option prices calculated from the formula obtained in this section with the results of Monte Carlo(MC) simulation using Brownian bridge technique given in \cite{korn2010}. In each cell of 
Table \ref{k100dtable_BB}, four values are shown; the first value is the multi-step barrier option price calculated from the formula, the second value in the curly brackets is the MC estimated option price, the third value in parentheses is the standard error (SE) of the repeated MC experiments, and the fourth value in squared brackets is the absolute error, which is the model price using multi-step probabilities subtracted from the MC estimated price. Again, SEs and absolute errors of MC estimates are very small in general and the RMS relative error of the Monte Carlo simulation for down-barrier options in Table \ref{k100dtable_BB}  is 0.0052(0.52\%). The MC simulation used 1,000,000 sample paths. 


\begin{table}
\small
\caption{Multi-step up-barrier option prices}
\label{uptable}

\begin{subtable}{0.9\textwidth}
\caption{Multi-step up-barrier option prices at various input levels with $S(0)=100$, $t_1=1/12$, $t_2=2/12$, $\cdots$, $t_n=T=6/12$; Barrier types of 1, 2, and 3 are displayed in Figure \ref{upbarriertype}. }
\label{upoptions}
\renewcommand{\tabcolsep}{1.5mm}
\renewcommand{\arraystretch}{0.8}
\begin{tabular}{| c | c | c | c c c c| c c  c c| c  c c c |} \hiderowcolors\cline{3-15}
                   \multicolumn{2}{c|}{} &Barrier     & \multicolumn{4}{c|}{$K=90$}     & \multicolumn{4}{c|}{$K=100$}   & \multicolumn{4}{c|}{$K=110$}    \\ 
                   \cline{1-2} \cline{4-15}                                  
$r$                     & $\sigma$                & type & UOC  & UIC   & UOP  & UIP  & UOC  & UIC  & UOP  & UIP  & UOC  & UIC  & UOP   & UIP  \\ \hline
\multirow{6}{*}{0.03} & \multirow{3}{*}{0.2} & 1    & 2.58 & 10.22 & 1.16 & 0.30 & 0.67 & 5.71 & 3.54 & 1.34 & 0.05 & 2.56 & 7.23  & 3.75 \\
                      &                      & 2    & 3.81 & 8.99  & 1.46 & 0.002 & 0.75 & 5.62 & 4.85 & 0.04 & 0.004 & 2.61 & 10.55 & 0.42 \\
                      &                      & 3    & 1.91 & 10.89 & 1.18 & 0.28 & 0.28 & 6.09 & 3.65 & 1.23 & 0 & 2.61 & 7.47  & 3.51 \\ \cline{2-15}
                      & \multirow{3}{*}{0.3} & 1    & 0.97 & 13.91 & 2.10 & 1.44 & 0.24 & 8.90 & 4.12 & 3.54 & 0.02 & 5.22 & 6.63  & 6.97 \\
                      &                      & 2    & 1.77 & 13.11 & 3.46 & 0.08 & 0.30 & 8.84 & 7.28 & 0.38 & 0.001 & 5.24 & 12.25 & 1.35 \\
                      &                      & 3    & 0.67 & 14.21 & 2.17 & 1.38 & 0.09 & 9.06 & 4.29 & 3.37 & 0 & 5.24 & 6.91  & 6.69 \\ \cline{1-15}
\multirow{6}{*}{0.04} & \multirow{3}{*}{0.2} & 1    & 2.59 & 10.56 & 1.08 & 0.28 & 0.67 & 5.95 & 3.36 & 1.28 & 0.05 & 2.70 & 6.93  & 3.64 \\
                      &                      & 2    & 3.80 & 9.35  & 1.36 & 0.001 & 0.75 & 5.87 & 4.61 & 0.04 & 0.004 & 2.75 & 10.16 & 0.42 \\
                      &                      & 3    & 1.90 & 11.25 & 1.10 & 0.26 & 0.28 & 6.34 & 3.47 & 1.18 & 0 & 2.76 & 7.16  & 3.41 \\ \cline{2-15}
                      & \multirow{3}{*}{0.3} & 1    & 0.97 & 14.22 & 2.01 & 1.39 & 0.25 & 9.14 & 3.97 & 3.44 & 0.02 & 5.39 & 6.42  & 6.81 \\
                      &                      & 2    & 1.76 & 13.42 & 3.32 & 0.08 & 0.31 & 9.09 & 7.04 & 0.37 & 0.001 & 5.41 & 11.90 & 1.33 \\
                      &                      & 3    & 0.67 & 14.52 & 2.08 & 1.32 & 0.09 & 9.30 & 4.14 & 3.27 & 0 & 5.41 & 6.69  & 6.54 \\ \cline{1-15}
\end{tabular}
\end{subtable}

\bigskip
\begin{subtable}{0.9\textwidth}
\caption{Monte Carlo simulated up-barrier option prices; in each cell, the first number is the multi-step option price, the number in the curly brackets is the MC simulated price, the number in the parentheses is the standard error of the MC experiments, and the number in the square brackets is the difference between the simulated price and the multi-step formula price. The simulation used 1,000,000 sample paths. RMS relative error of this simulation is 0.0035.}
\label{k100table_BB_RMS}
\renewcommand{\tabcolsep}{1.5mm}
\renewcommand{\arraystretch}{0.8}
\begin{tabular}{| c | c | c | c | c | c |c|c|c|c|} \hiderowcolors\cline{2-10}
\multicolumn{1}{c|}{}&Barrier&\multicolumn{4}{c|}{$r=0.03$}& \multicolumn{4}{c|}{$r=0.04$}\\ 
\cline{1-1} \cline{3-10} 
{$\sigma$} &{Type} & {UOC} & {UIC} & {UOP} & {UIP}& {UOC} & {UIC} & {UOP} & {UIP}\\ \hline

\multirow{12}{0.5cm}{0.2}
&\multirow{4}{0.5cm}{1}
&0.6654&5.7057&3.5436&1.3386&0.6746&5.9525&3.3635&1.2834\\&&\{0.6642\}&\{5.7097\}&\{3.5353\}&\{1.3385\}&\{0.6733\}&\{5.9558\}&\{3.3670\}&\{1.2859\}\\&&(0.0018)&(0.0092)&(0.0062)&(0.0030)&(0.0019)&(0.0098)&(0.0059)&(0.0030)\\&&[-0.0011]&[0.0041]&[-0.0083]&[-0.0001]&[-0.0013]&[0.0033]&[0.0035]&[0.0024]\\ \cline{2-10}

&\multirow{4}{0.5cm}{2}
&0.7467&5.6243&4.8459&0.0363&0.7527&5.8744&4.6115&0.0354\\&&\{0.7484\}&\{5.6142\}&\{4.8425\}&\{0.0361\}&\{0.7530\}&\{5.8537\}&\{4.6163\}&\{0.0358\}\\&&(0.0017)&(0.0097)&(0.0071)&(0.0004)&(0.0017)&(0.0098)&(0.0069)&(0.0004)\\&&[0.0017]&[-0.0101]&[-0.0034]&[-0.0002]&[0.0003]&[-0.0207]&[0.0048]&[0.0004]\\ \cline{2-10}

&\multirow{4}{0.5cm}{3}
&0.2831&6.0880&3.6499&1.2323&0.2846&6.3425&3.4660&1.1809\\&&\{0.2837\}&\{6.0792\}&\{3.6541\}&\{1.2320\}&\{0.2848\}&\{6.3330\}&\{3.4744\}&\{1.1783\}\\&&(0.0009)&(0.0095)&(0.0061)&(0.0028)&(0.0009)&(0.0102)&(0.0059)&(0.0027)\\&&[0.0006]&[-0.0088]&[0.0041]&[-0.0003]&[0.0002]&[-0.0095]&[0.0084]&[-0.0026]\\ \cline{1-10}

\multirow{12}{0.5cm}{0.3}
&\multirow{4}{0.5cm}{1}
&0.2450&8.9044&4.1165&3.5441&0.2461&9.1443&3.9695&3.4408\\&&\{0.2434\}&\{8.9132\}&\{4.1108\}&\{3.5391\}&\{0.2450\}&\{9.1240\}&\{3.9702\}&\{3.4536\}\\&&(0.0010)&(0.0145)&(0.0079)&(0.0060)&(0.0011)&(0.0144)&(0.0080)&(0.0059)\\&&[-0.0015]&[0.0087]&[-0.0057]&[-0.0050]&[-0.0012]&[-0.0203]&[0.0007]&[0.0128]\\ \cline{2-10}

&\multirow{4}{0.5cm}{2}
&0.3045&8.8449&7.2786&0.3820&0.3051&9.0854&7.0356&0.3747\\&&\{0.3046\}&\{8.8474\}&\{7.2818\}&\{0.3819\}&\{0.3051\}&\{9.1019\}&\{7.0268\}&\{0.3711\}\\&&(0.0010)&(0.0149)&(0.0101)&(0.0018)&(0.0010)&(0.0147)&(0.0099)&(0.0017)\\&&[0.0001]&[0.0024]&[0.0032]&[-0.0001]&[0.0000]&[0.0166]&[-0.0089]&[-0.0036]\\ \cline{2-10}

&\multirow{4}{0.5cm}{3}
&0.0872&9.0622&4.2895&3.3711&0.0873&9.3032&4.1380&3.2723\\&&\{0.0869\}&\{9.0894\}&\{4.2880\}&\{3.3692\}&\{0.0884\}&\{9.3034\}&\{4.1430\}&\{3.2753\}\\&&(0.0004)&(0.0148)&(0.0077)&(0.0057)&(0.0004)&(0.0145)&(0.0078)&(0.0056)\\&&[-0.0002]&[0.0272]&[-0.0015]&[-0.0019]&[0.0011]&[0.0002]&[0.0049]&[0.0030]\\

 \hline

\end{tabular}
\end{subtable}

\end{table}


 \begin{figure}[H]
\captionsetup[subfigure]{justification=centering}
\centering
\begin{subfigure}[b]{0.85\textwidth}
\centering
  \includegraphics[width=0.7\textwidth]{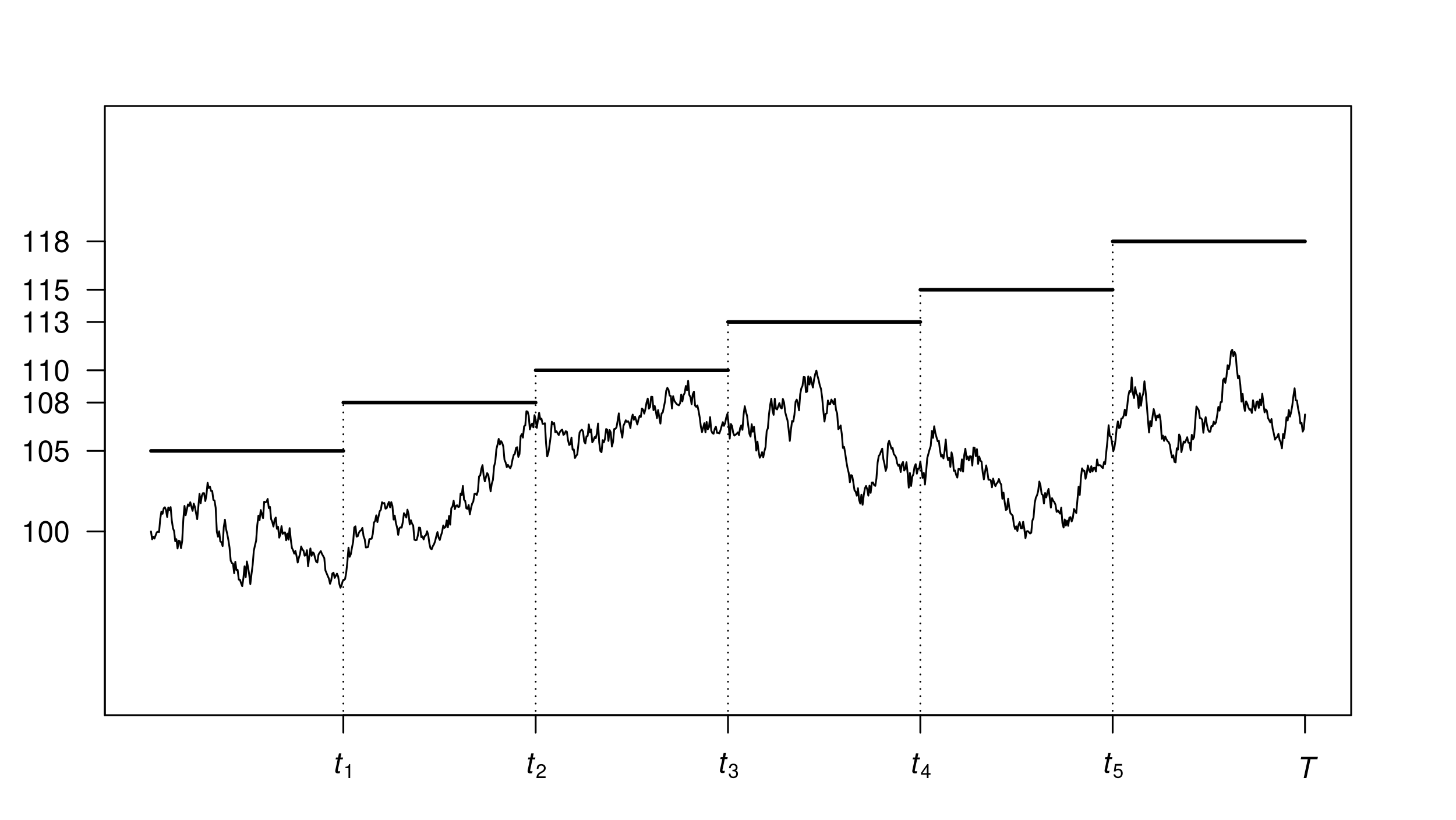}
\caption{Type1 up barrier }
   \end{subfigure}
\begin{subfigure}[b]{0.85\textwidth}
 \centering
 \includegraphics[width=0.7\textwidth]{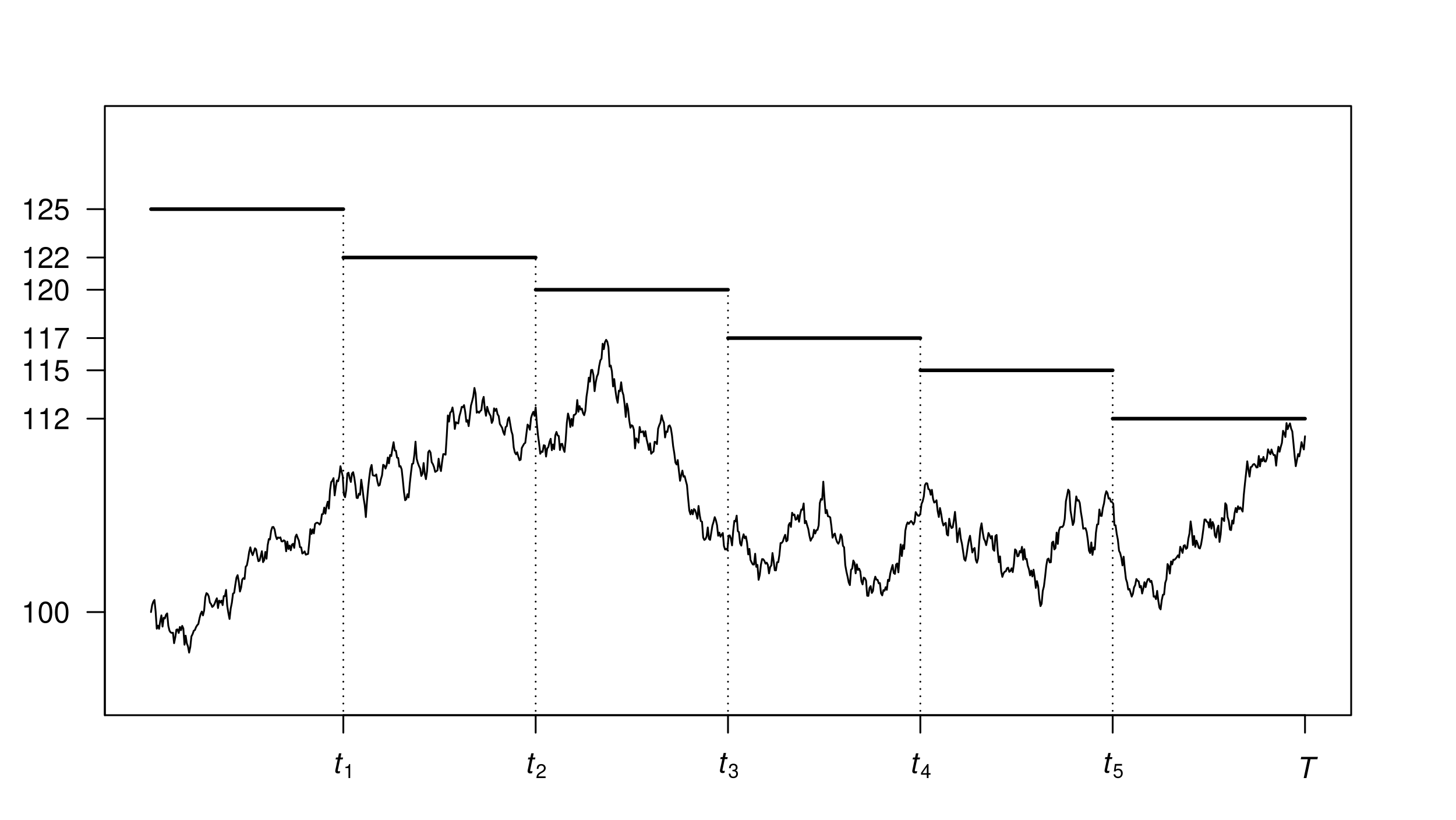}
\caption{Type2 up barrier }
   \end{subfigure}

\begin{subfigure}[b]{0.85\textwidth}
  \centering
  \includegraphics[width=0.7\textwidth]{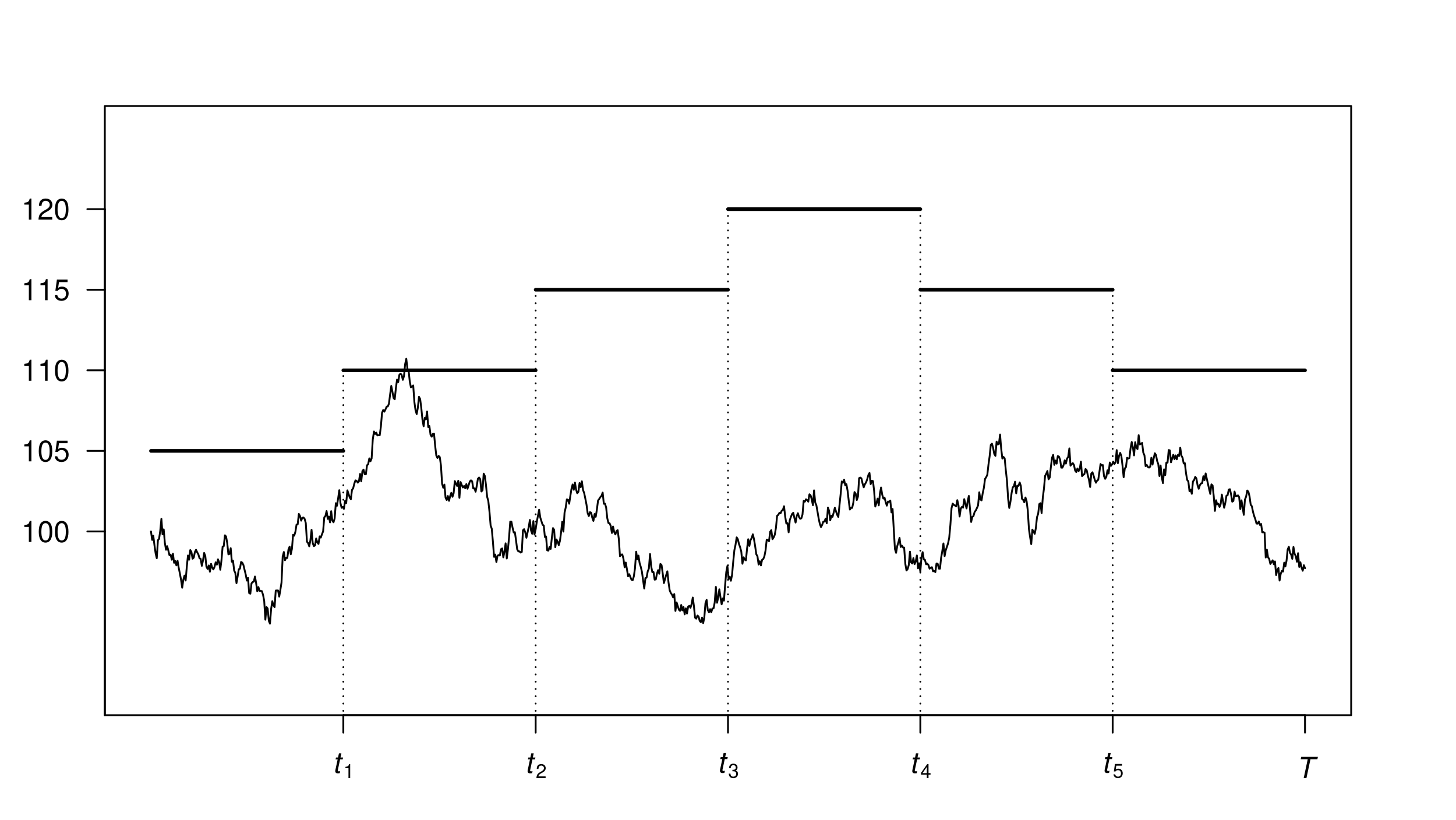}
\caption{Type3 up barrier }
   \end{subfigure}
\caption {Three types of multi-step up barriers used in Table \ref{uptable}}
\label{upbarriertype}
\end{figure}


\begin{table}
\small
\caption{Multi-step down-barrier option prices}
\label{downtable}
\begin{subtable}{0.9\textwidth}
\caption{Multi-step down-barrier option prices at various input levels with $S(0)=100$, $t_1=1/12$, $t_2=2/12$, $\cdots$, $t_n=T=6/12$; Barrier types of 1, 2, and 3 are displayed in Figure \ref{downbarriertype}.}
\label{downoptions}
\renewcommand{\tabcolsep}{1.5mm}
\renewcommand{\arraystretch}{0.8}
\begin{tabular}{| c | c | c | c c c c| c c  c c| c  c c c |} \hiderowcolors\cline{3-15}
                   \multicolumn{2}{c|}{} &Barrier     & \multicolumn{4}{c|}{$K=90$}     & \multicolumn{4}{c|}{$K=100$}   & \multicolumn{4}{c|}{$K=110$}    \\ 
                   \cline{1-2} \cline{4-15}                                  
$r$                     & $\sigma$                & type & DOC   & DIC  & DOP  & DIP  & DOC  & DIC  & DOP  & DIP  & DOC  & DIC  & DOP  & DIP   \\ \hline
\multirow{6}{*}{0.03} & \multirow{3}{*}{0.2} & 1    & 12.57 & 0.23 & 0.006 & 1.45 & 6.36 & 0.01 & 0.97 & 3.91 & 2.61 & 0.001 & 4.39 & 6.58  \\
                      &                      & 2    & 9.02  & 3.77 & 0.056 & 1.40 & 4.87 & 1.50 & 0.80 & 4.08 & 2.14 & 0.47 & 2.96 & 8.01  \\
                      &                      & 3    & 8.74  & 4.06 & 0 & 1.46 & 4.86 & 1.51 & 0.17 & 4.71 & 2.14 & 0.47 & 1.51 & 9.46  \\ \cline{2-15}
                      & \multirow{3}{*}{0.3} & 1    & 14.02 & 0.86 & 0.003 & 3.54 & 8.95 & 0.20 & 0.45 & 7.21 & 5.19 & 0.04 & 2.23 & 11.37 \\
                      &                      & 2    & 7.81  & 7.07 & 0.026 & 3.52 & 5.26 & 3.89 & 0.33 & 7.33 & 3.26 & 1.98 & 1.19 & 12.41 \\
                      &                      & 3    & 7.58  & 7.30 & 0 & 3.54 & 5.21 & 3.94 & 0.05 & 7.61 & 3.25 & 1.99 & 0.51 & 13.10 \\ \cline{1-15}
\multirow{6}{*}{0.04} & \multirow{3}{*}{0.2} & 1    & 12.92 & 0.23 & 0.005 & 1.36 & 6.61 & 0.01 & 0.95 & 3.70 & 2.76 & 0.001 & 4.33 & 6.25  \\
                      &                      & 2    & 9.31  & 3.83 & 0.053 & 1.31 & 5.08 & 1.54 & 0.78 & 3.87 & 2.26 & 0.50 & 2.91 & 7.67  \\
                      &                      & 3    & 9.03  & 4.11 & 0 & 1.37 & 5.06 & 1.56 & 0.17 & 4.48 & 2.26 & 0.50 & 1.50 & 9.08  \\ \cline{2-15}
                      & \multirow{3}{*}{0.3} & 1    & 14.33 & 0.86 & 0.002 & 3.40 & 9.19 & 0.20 & 0.45 & 6.96 & 5.37 & 0.05 & 2.21 & 11.02 \\
                      &                      & 2    & 8.01  & 7.18 & 0.025 & 3.38 & 5.41 & 3.98 & 0.33 & 7.08 & 3.37 & 2.04 & 1.18 & 12.05 \\
                      &                      & 3    & 7.77  & 7.41 & 0 & 3.40 & 5.37 & 4.02 & 0.05 & 7.36 & 3.36 & 2.05 & 0.50 & 12.73 \\ \cline{1-15}
\end{tabular}
\end{subtable}

\bigskip
\begin{subtable}{0.9\textwidth}
\caption{ Monte Carlo simulated down-barrier option prices; in each cell, the first number is the multi-step option price, the number in the curly brackets is the MC simulated price, the number in the parentheses is the standard error of the MC experiments, and the number in the square brackets is the difference between the simulated price and the multi-step formula price. The simulation used 1,000,000 sample paths. RMS relative error of this simulation is 0.0052.}
\label{k100dtable_BB}
\renewcommand{\tabcolsep}{1.5mm}
\renewcommand{\arraystretch}{0.8}
\begin{tabular}{| c | c | c | c | c | c |c|c|c|c|} \hiderowcolors\cline{2-10}
\multicolumn{1}{c|}{}&Barrier&\multicolumn{4}{c|}{$r=0.03$}& \multicolumn{4}{c|}{$r=0.04$}\\ 
\cline{1-1} \cline{3-10} 
{$\sigma$} &Type & {DOC} & {DIC} & {DOP} & {DIP}  & {DOC} & {DIC} & {DOP} & {DIP}\\ \hline

\multirow{12}{0.5cm}{0.2}
&\multirow{4}{0.5cm}{1}
&6.3590&0.0120&0.9676&3.9146&6.6149&0.0121&0.9451&3.7018\\&&\{6.3620\}&\{0.0120\}&\{0.9661\}&\{3.9076\}&\{6.6166\}&\{0.0125\}&\{0.9480\}&\{3.7049\}\\&&(0.0091)&(0.0002)&(0.0019)&(0.0072)&(0.0097)&(0.0002)&(0.0019)&(0.0067)\\&&[0.0030]&[-0.0001]&[-0.0014]&[-0.0070]&[0.0017]&[0.0004]&[0.0029]&[0.0030]\\ \cline{2-10}

&\multirow{4}{0.5cm}{2}
&4.8744&1.4966&0.8024&4.0798&5.0833&1.5438&0.7793&3.8677\\&&\{4.8695\}&\{1.4931\}&\{0.8006\}&\{4.0779\}&\{5.0711\}&\{1.5356\}&\{0.7805\}&\{3.8716\}\\&&(0.0087)&(0.0035)&(0.0020)&(0.0069)&(0.0085)&(0.0036)&(0.0020)&(0.0069)\\&&[-0.0049]&[-0.0035]&[-0.0018]&[-0.0019]&[-0.0122]&[-0.0082]&[0.0013]&[0.0039]\\ \cline{2-10}

&\multirow{4}{0.5cm}{3}
&4.8563&1.5148&0.1728&4.7094&5.0649&1.5621&0.1698&4.4771\\&&\{4.8541\}&\{1.5088\}&\{0.1718\}&\{4.7142\}&\{5.0576\}&\{1.5602\}&\{0.1693\}&\{4.4834\}\\&&(0.0085)&(0.0035)&(0.0006)&(0.0070)&(0.0092)&(0.0036)&(0.0006)&(0.0066)\\&&[-0.0021]&[-0.0060]&[-0.0010]&[0.0048]&[-0.0074]&[-0.0019]&[-0.0005]&[0.0063]\\ \cline{1-10}

\multirow{12}{0.5cm}{0.3}
&\multirow{4}{0.5cm}{1}
&8.9480&0.2014&0.4537&7.2069&9.1872&0.2033&0.4484&6.9619\\&&\{8.9577\}&\{0.1989\}&\{0.4555\}&\{7.1944\}&\{9.1668\}&\{0.2022\}&\{0.4481\}&\{6.9758\}\\&&(0.0144)&(0.0013)&(0.0013)&(0.0108)&(0.0145)&(0.0013)&(0.0013)&(0.0108)\\&&[0.0097]&[-0.0025]&[0.0018]&[-0.0125]&[-0.0204]&[-0.0011]&[-0.0004]&[0.0139]\\ \cline{2-10}

&\multirow{4}{0.5cm}{2}
&5.2580&3.8914&0.3338&7.3268&5.4124&3.9780&0.3289&7.0814\\&&\{5.2602\}&\{3.8918\}&\{0.3348\}&\{7.3289\}&\{5.4206\}&\{3.9864\}&\{0.3287\}&\{7.0691\}\\&&(0.0113)&(0.0075)&(0.0012)&(0.0102)&(0.0113)&(0.0073)&(0.0012)&(0.0100)\\&&[0.0023]&[0.0003]&[0.0010]&[0.0021]&[0.0081]&[0.0084]&[-0.0002]&[-0.0123]\\ \cline{2-10}

&\multirow{4}{0.5cm}{3}
&5.2130&3.9364&0.0532&7.6074&5.3671&4.0233&0.05270&7.3576\\&&\{5.2240\}&\{3.9523\}&\{0.0528\}&\{7.6044\}&\{5.3596\}&\{4.0322\}&\{0.0526\}&\{7.3656\}\\&&(0.0115)&(0.0075)&(0.0003)&(0.0104)&(0.0109)&(0.0076)&(0.0003)&(0.0104)\\&&[0.0110]&[0.0159]&[-0.0003]&[-0.0031]&[-0.0076]&[0.0089]&[-0.0001]&[0.0080]\\ 

 \hline

\end{tabular}
\end{subtable}

\end{table}


\begin{figure}[H]
\captionsetup[subfigure]{justification=centering}
\centering
\begin{subfigure}[b]{0.85\textwidth}
 \centering
  \includegraphics[width=0.7\textwidth]{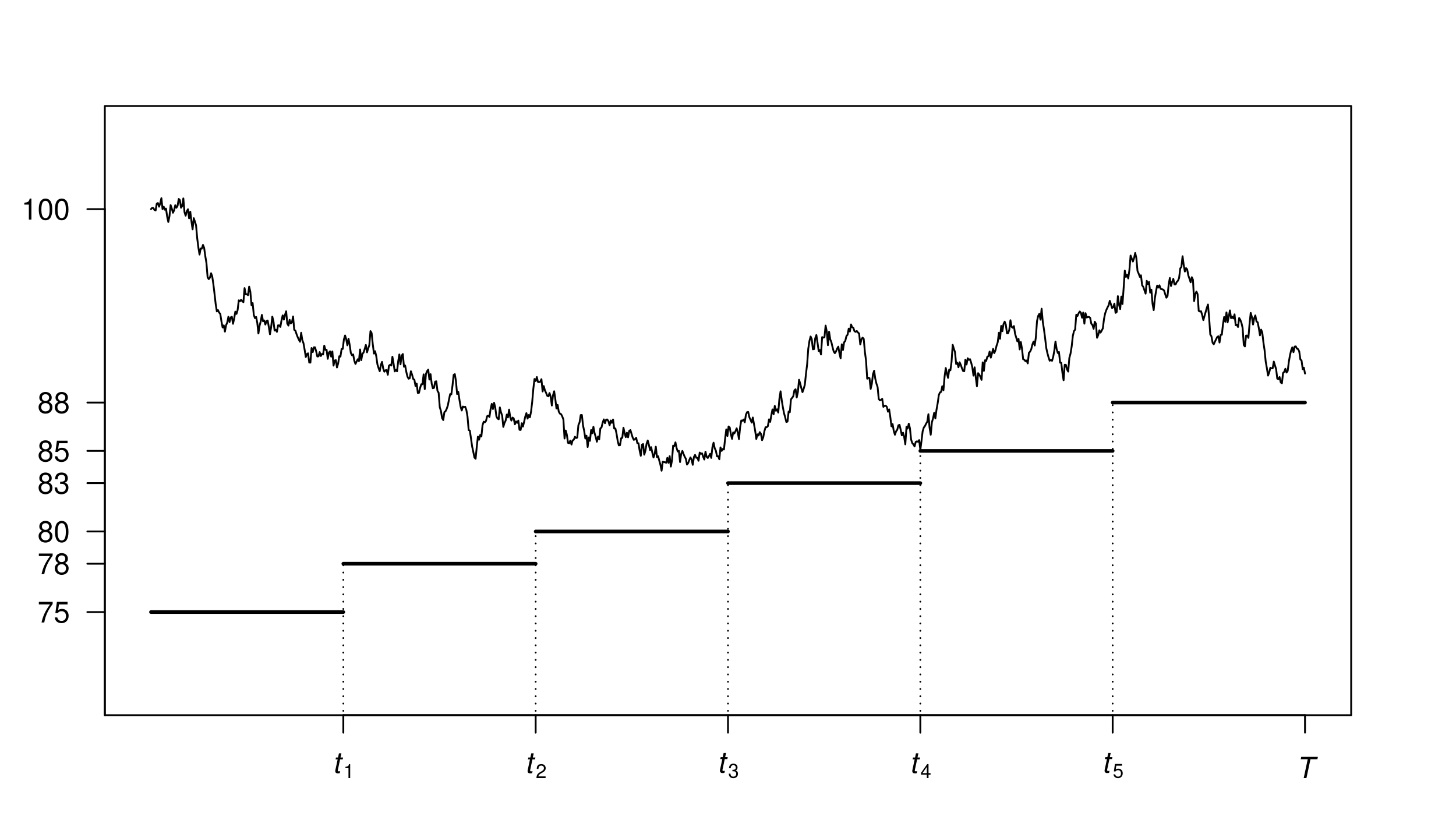}
\caption{Type1 down barrier }
   \end{subfigure}

\begin{subfigure}[b]{0.85\textwidth}
 \centering
 \includegraphics[width=0.7\textwidth]{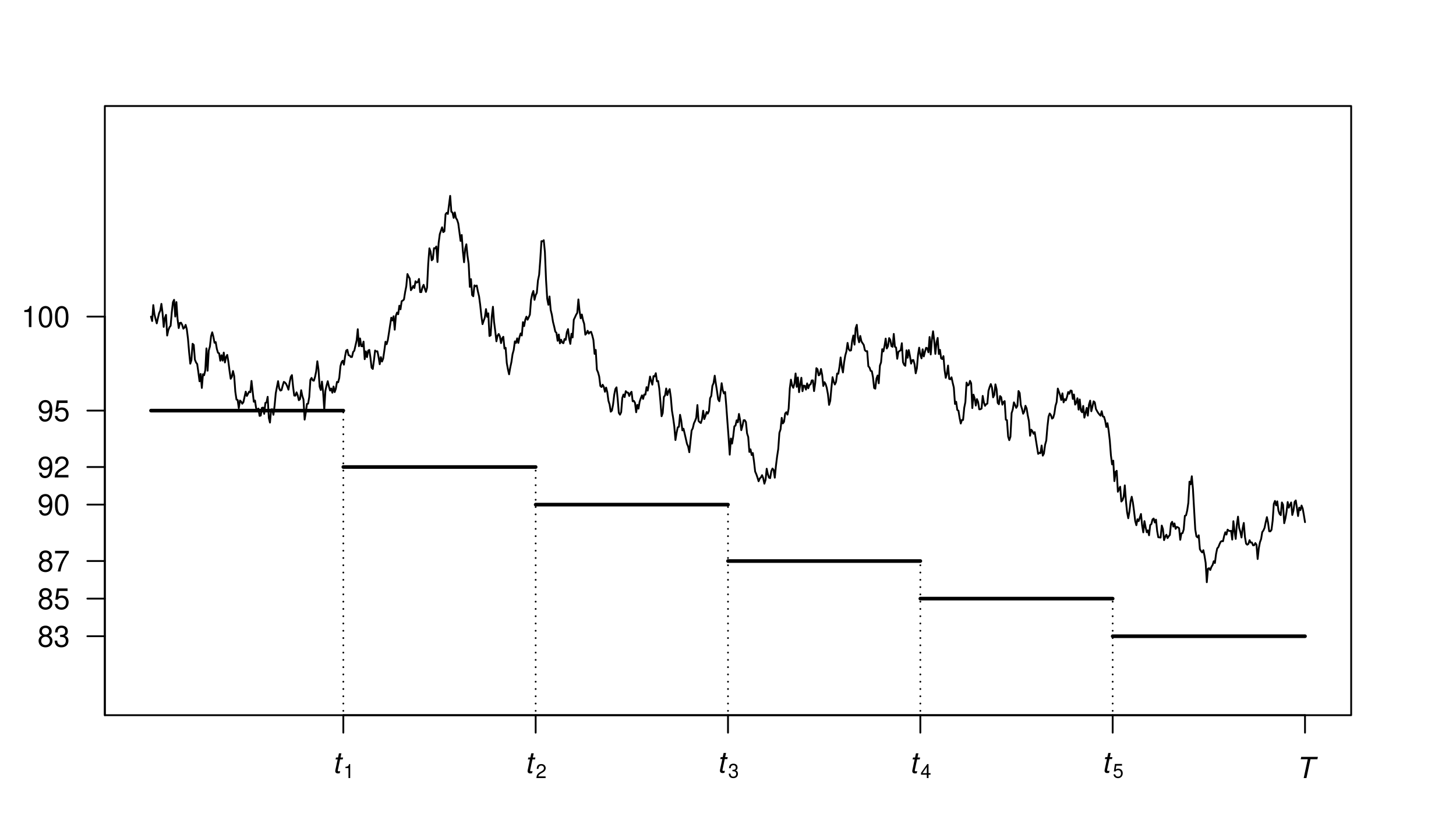}
\caption{Type2 down barrier }
   \end{subfigure}

\begin{subfigure}[b]{0.85\textwidth}
 \centering
  \includegraphics[width=0.7\textwidth]{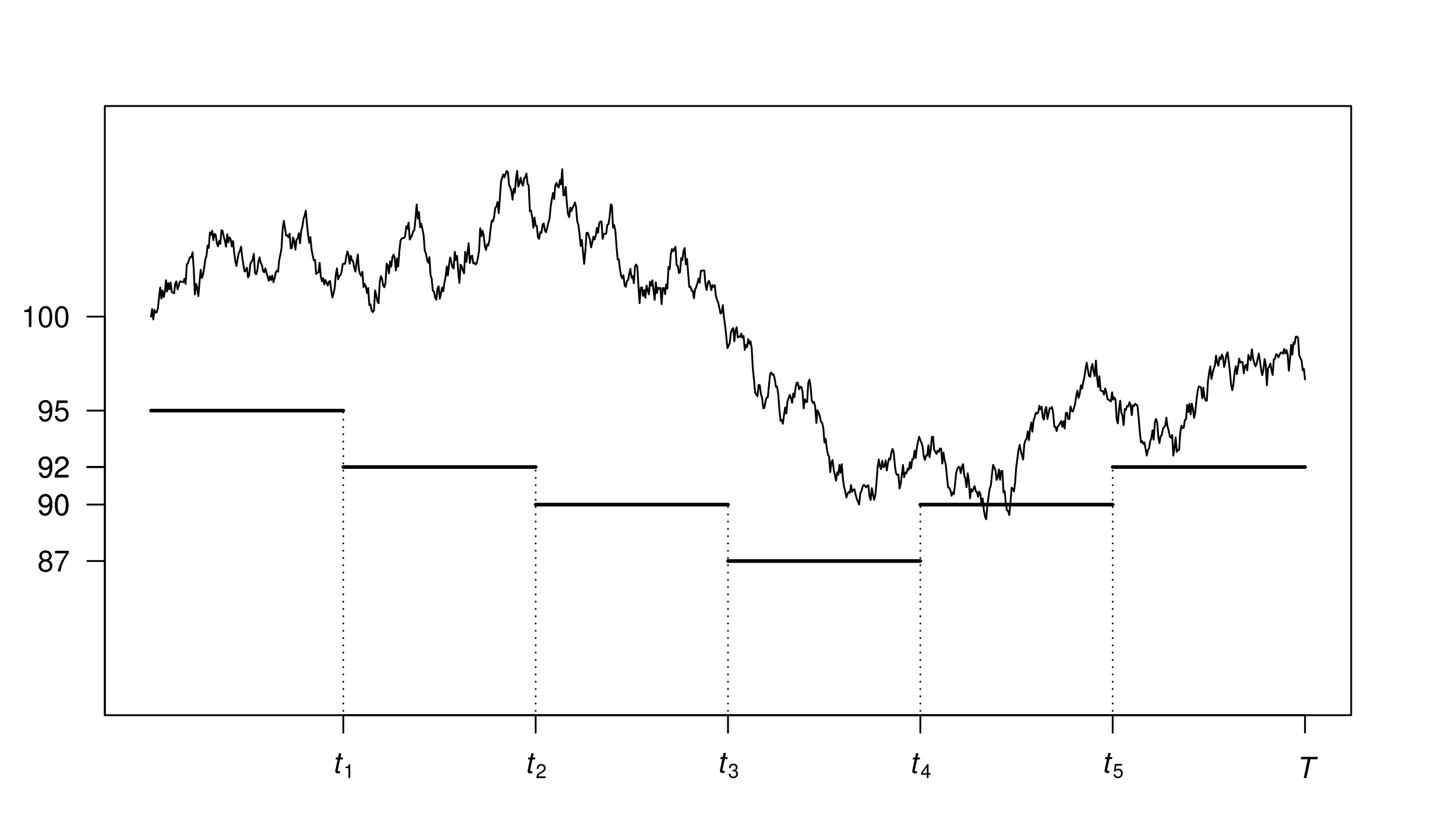}
\caption{Type3 down barrier }
   \end{subfigure}
\caption {Three types of multi-step down barriers used in Table \ref{downtable}}
\label{downbarriertype}
\end{figure}

Some high volatility cases are studied in Table \ref{kdtable_vol}. For the strike price $K=100$, the same interest rates, and the same barrier values as in Tables \ref{k100table_BB_RMS} and \ref{k100dtable_BB}, we calculate up and down barrier option prices for higher volatilities of $\sigma=0.4$ and $\sigma=0.5$. Looking at Table \ref{kdtable_vol} with Tables \ref{k100table_BB_RMS} and \ref{k100dtable_BB} together, we observe the same patterns observed in Tables \ref{k100table_BB_RMS} and \ref{k100dtable_BB} when $\sigma$ increases from 0.2 to 0.5. Columns of ``Call'' and ``Put'' have the standard European call and put option prices from the Black-Scholes formula. As well expected, the sums of UOC and UIC are equal to the standard call option prices and the sums of UOP and UIP are equal to the standard put option prices. The same phenomenon occurs in the case of down-barrier options.

\begin{table}[hbt!] 
\small

\caption{Multi-step option prices with higher volatilities}
\label{kdtable_vol}

\begin{subtable}{0.4\textwidth}
\caption{$K=100$, Up-barrier options}
\label{k100table_vol}
\renewcommand{\tabcolsep}{1.5mm}
\renewcommand{\arraystretch}{0.9}
\begin{tabular}{ | c | c | c c c c c c | c | c | c | c | c | c | } \hiderowcolors\hline

{$r$} & {$\sigma$} &{${B}_{1}$ } &{${B}_{2}$ }&{${B}_{3}$ }&{${B}_{4}$ }&{${B}_{5}$ }&{${B}_{6}$ } & {UOC} & {UIC} & {UOP} & {UIP} &{Call} & {Put}\\ \hline

\multirow{6}{0.5cm}{0.03}

& \multirow{3}{0.3cm}{0.4}  &\multirow{1}{0.3cm}{105}  &\multirow{1}{0.3cm}{108} &\multirow{1}{0.3cm}{110} &\multirow{1}{0.3cm}{113} &\multirow{1}{0.3cm}{115} &\multirow{1}{0.3cm}{118} &0.1124&11.8102&4.3736&6.0602& \multirow{3}*{11.9226} & \multirow{3}*{10.4338} \\
 & &  \multirow{1}{0.3cm}{125}  &\multirow{1}{0.3cm}{122} &\multirow{1}{0.3cm}{120} &\multirow{1}{0.3cm}{117} &\multirow{1}{0.3cm}{115} &\multirow{1}{0.3cm}{112} &0.1467&11.7758&9.2451&1.1886	& & \\
 & & \multirow{1}{0.3cm}{105}  &\multirow{1}{0.3cm}{110} &\multirow{1}{0.3cm}{115} &\multirow{1}{0.3cm}{120} &\multirow{1}{0.3cm}{115} &\multirow{1}{0.3cm}{110} &0.0355&11.8870&4.5632&5.8705	& & \\ \cline{2-14}
 & \multirow{3}{0.3cm}{0.5}  & \multirow{1}{0.3cm}{105}  &\multirow{1}{0.3cm}{108} &\multirow{1}{0.3cm}{110} &\multirow{1}{0.3cm}{113} &\multirow{1}{0.3cm}{115} &\multirow{1}{0.3cm}{118} &0.0598&14.6241&4.5085&8.6866	& \multirow{3}*{14.6840} & \multirow{3}*{13.1952} \\
 & & \multirow{1}{0.3cm}{125}  &\multirow{1}{0.3cm}{122} &\multirow{1}{0.3cm}{120} &\multirow{1}{0.3cm}{117} &\multirow{1}{0.3cm}{115} &\multirow{1}{0.3cm}{112} &0.0804&14.6036&10.8109&2.3842	& & \\
 & & \multirow{1}{0.3cm}{105}  &\multirow{1}{0.3cm}{110} &\multirow{1}{0.3cm}{115} &\multirow{1}{0.3cm}{120} &\multirow{1}{0.3cm}{115} &\multirow{1}{0.3cm}{110} &0.0174&14.6666&4.6943&8.5009	& & \\ \hline

\multirow{6}{0.5cm}{0.04}

& \multirow{3}{0.3cm}{0.4}  & \multirow{1}{0.3cm}{105}  &\multirow{1}{0.3cm}{108} &\multirow{1}{0.3cm}{110} &\multirow{1}{0.3cm}{113} &\multirow{1}{0.3cm}{115} &\multirow{1}{0.3cm}{118} &0.1126&12.0401&4.2513&5.9213 & \multirow{3}*{12.1527} & \multirow{3}*{10.1725} \\
 & & \multirow{1}{0.3cm}{125}  &\multirow{1}{0.3cm}{122} &\multirow{1}{0.3cm}{120} &\multirow{1}{0.3cm}{117} &\multirow{1}{0.3cm}{115} &\multirow{1}{0.3cm}{112} &0.1467&12.0059&9.0035&1.1690	& & \\
 & &  \multirow{1}{0.3cm}{105}  &\multirow{1}{0.3cm}{110} &\multirow{1}{0.3cm}{115} &\multirow{1}{0.3cm}{120} &\multirow{1}{0.3cm}{115} &\multirow{1}{0.3cm}{110} &0.0355&12.1171&4.4368&5.7357	& & \\ \cline{2-14}
 & \multirow{3}{0.3cm}{0.5}  & \multirow{1}{0.3cm}{105}  &\multirow{1}{0.3cm}{108} &\multirow{1}{0.3cm}{110} &\multirow{1}{0.3cm}{113} &\multirow{1}{0.3cm}{115} &\multirow{1}{0.3cm}{118} &0.0598&14.8449&4.4038&8.5208	& \multirow{3}*{14.9048} & \multirow{3}*{12.9246} \\
 & &  \multirow{1}{0.3cm}{125}  &\multirow{1}{0.3cm}{122} &\multirow{1}{0.3cm}{120} &\multirow{1}{0.3cm}{117} &\multirow{1}{0.3cm}{115} &\multirow{1}{0.3cm}{112} &0.0803&14.8244&10.5752&2.3494	& & \\
 & & \multirow{1}{0.3cm}{105}  &\multirow{1}{0.3cm}{110} &\multirow{1}{0.3cm}{115} &\multirow{1}{0.3cm}{120} &\multirow{1}{0.3cm}{115} &\multirow{1}{0.3cm}{110} &0.0174&14.8874&4.5862&8.3384	& & \\ \hline

\end{tabular}
\end{subtable}

\begin{subtable}{0.4\textwidth}
\caption{$K=100$, Down-barrier options}
\label{k100dtable_vol}
\renewcommand{\tabcolsep}{1.5mm}
\renewcommand{\arraystretch}{0.9}
\begin{tabular}{ | c | c | c c c c c c | c | c | c | c | c | c | } \hiderowcolors\hline

{$r$} & {$\sigma$} &{${B}_{1}$ } &{${B}_{2}$ }&{${B}_{3}$ }&{${B}_{4}$ }&{${B}_{5}$ }&{${B}_{6}$ } & {DOC} & {DIC} & {DOP} & {DIP} &{Call} & {Put}\\ \hline

\multirow{6}{0.5cm}{0.03}

& \multirow{3}{0.3cm}{0.4}  & \multirow{1}{0.3cm}{75}  &\multirow{1}{0.3cm}{78} &\multirow{1}{0.3cm}{80} &\multirow{1}{0.3cm}{83} &\multirow{1}{0.3cm}{85} &\multirow{1}{0.3cm}{88} &11.1647&0.7579&0.2328&10.2010& \multirow{3}*{11.9226} & \multirow{3}*{10.4338} \\
 & &  95 &92 &90 &87 &85 &83 &5.3671&6.5555&0.1602&10.2735	& & \\
 & &  95 &92 &90 &87 &90 &92 &5.3073&6.6153&0.0219&10.4119      & & \\ \cline{2-14}

 & \multirow{3}{0.3cm}{0.5}  & 75 &78 &80 &83 &85 &88 &13.0059&1.6780&0.1320&13.0631	& \multirow{3}*{14.6840} & \multirow{3}*{13.1952} \\
 & &  95 &92 &90 &87 &85 &83 &5.3917&9.2922&0.0871&13.1081	& & \\
 & &  95 &92 &90 &87 &90 &92 &5.3262&9.3578&0.0109&13.1843	& & \\ \hline

\multirow{6}{0.5cm}{0.04}

& \multirow{3}{0.3cm}{0.4}  & 75 &78 &80 &83 &85 &88 &11.3884&0.7643&0.2310&9.9415 & \multirow{3}*{12.1527} & \multirow{3}*{10.1725} \\
 & &  95 &92 &90 &87 &85 &83 &5.4864&6.6663&0.1587&10.0138	& & \\
 & &  95 &92 &90 &87 &90 &92 &5.4263&6.7264&0.0217&10.1508	& & \\ \cline{2-14}

 & \multirow{3}{0.3cm}{0.5}  & 75 &78 &80 &83 &85 &88 &13.2141&1.6907&0.1313&12.7933	& \multirow{3}*{14.9048} & \multirow{3}*{12.9246} \\
 & &  95 &92 &90 &87 &85 &83 &5.4875&9.4173&0.0865&12.8381	& & \\
 & &  95 &92 &90 &87 &90 &92 &5.4216&9.4831&0.0108&12.9138	& & \\ \hline

\end{tabular}
\end{subtable}

\end{table}

\begin{table}[h] 
\small

\caption{Multi-step option prices with longer maturities; In each cell, the first number is the multi-step option price, the number in the curly brackets is the MC simulated price, the number in the parentheses is the standard error, and the number in the square brackets is the difference between the simulated price and the multi-step formula price. The simulation used 1,000,000 sample paths. RMS relative errors are 0.0034 for up-barrier options and 0.004 for down-barrier options, respectively. Barriers change at every 6 months with levels 105-108-110-113-115-118 for up-barriers and 95-92-90-87-85-83 for down-barriers. } 
\label{ktable_long}

\renewcommand{\tabcolsep}{1.4mm}
\renewcommand{\arraystretch}{0.9}
\medskip
\begin{tabular}{| c | c | c |  c | c | c | c | c | c | c |c |} \hiderowcolors\hline

{$T$} &{$r$} & {$\sigma$} & {UOC} & {UIC} & {UOP} & {UIP}& {DOC} & {DIC} & {DOP} & {DIP}\\ \hline

\multirow{16}{0.5cm}{1}
&\multirow{8}{0.5cm}{0.03}
&\multirow{4}{0.5cm}{0.2}
&0.0341&9.3793&3.1901&3.2678&5.0954&4.3180&0.0422&6.4157\\&&&\{0.0340\}&\{9.3869\}&\{3.1883\}&\{3.2620\}&\{5.0996\}&\{4.3212\}&\{0.0419\}&\{6.4084\}\\&&&(0.0002)&(0.0140)&(0.0061)&(0.0045)&(0.0097)&(0.0063)&(0.0002)&(0.0091)\\&&&[-0.0001]&[0.0075]&[-0.0018]&[-0.0058]&[0.0042]&[0.0032]&[-0.0003]&[-0.0073]\\ \cline{3-11}
&&\multirow{4}{0.5cm}{0.3}
&0.0104&13.2729&3.6705&6.6574&5.1562&8.1271&0.0134&10.3144\\&&&\{0.0104\}&\{13.2562\}&\{3.6791\}&\{6.6697\}&\{5.1426\}&\{8.1241\}&\{0.0134\}&\{10.3353\}\\&&&(0.0001)&(0.0221)&(0.0072)&(0.0084)&(0.0122)&(0.0125)&(0.0001)&(0.0138)\\&&&[0.0000]&[-0.0167]&[0.0086]&[0.0123]&[-0.0137]&[-0.0030]&[0.0000]&[0.0209]\\ \cline{2-11}

&\multirow{8}{0.5cm}{0.04}
&\multirow{4}{0.5cm}{0.2}
&0.0339&9.8911&2.9403&3.0637&5.4184&4.5067&0.0412&5.9628\\&&&\{0.0341\}&\{9.8967\}&\{2.9335\}&\{3.0627\}&\{5.4212\}&\{4.5096\}&\{0.0407\}&\{5.9555\}\\&&&(0.0002)&(0.0143)&(0.0058)&(0.0044)&(0.0105)&(0.0058)&(0.0002)&(0.0089)\\&&&[0.0001]&[0.0055]&[-0.0068]&[-0.0009]&[0.0028]&[0.0029]&[-0.0005]&[-0.0072]\\ \cline{3-11}
&&\multirow{4}{0.5cm}{0.3}
&0.0103&13.7429&3.4688&6.3634&5.3777&8.3756&0.0132&9.8190\\&&&\{0.0103\}&\{13.7623\}&\{3.4684\}&\{6.3679\}&\{5.3871\}&\{8.3855\}&\{0.0133\}&\{9.8231\}\\&&&(0.0001)&(0.0216)&(0.0068)&(0.0082)&(0.0119)&(0.0123)&(0.0001)&(0.0132)\\&&&[0.0000]&[0.0193]&[-0.0003]&[0.0045]&[0.0095]&[0.0099]&[0.0001]&[0.0041]\\ \cline{1-11}

\multirow{16}{0.5cm}{2}
&\multirow{8}{0.5cm}{0.03}
&\multirow{4}{0.5cm}{0.2}
&0.0472&14.0265&3.1845&5.0656&6.1758&7.8978&0.0640&8.1861\\&&&\{0.0472\}&\{14.0240\}&\{3.1795\}&\{5.0625\}&\{6.1840\}&\{7.8872\}&\{0.0642\}&\{8.1778\}\\&&&(0.0003)&(0.0219)&(0.0070)&(0.0078)&(0.0141)&(0.0120)&(0.0004)&(0.0119)\\&&&[0.0000]&[-0.0025]&[-0.0049]&[-0.0032]&[0.0082]&[-0.0106]&[0.0002]&[-0.0083]\\ \cline{3-11}
&&\multirow{4}{0.5cm}{0.3}
&0.0143&19.3683&3.6514&9.9076&5.8655&13.5171&0.0209&13.5381\\&&&\{0.0143\}&\{19.4180\}&\{3.6530\}&\{9.9105\}&\{5.8806\}&\{13.5517\}&\{0.0208\}&\{13.5428\}\\&&&(0.0001)&(0.0339)&(0.0076)&(0.0126)&(0.0165)&(0.0224)&(0.0002)&(0.0170)\\&&&[0.0000]&[0.0497]&[0.0016]&[0.0029]&[0.0151]&[0.0346]&[0.0000]&[0.0046]\\ \cline{2-11}

&\multirow{8}{0.5cm}{0.04}
&\multirow{4}{0.5cm}{0.2}
&0.0466&15.0381&2.8176&4.5787&6.7081&8.3766&0.0612&7.3352\\&&&\{0.0464\}&\{15.0351\}&\{2.8240\}&\{4.5910\}&\{6.7125\}&\{8.3691\}&\{0.0611\}&\{7.3540\}\\&&&(0.0003)&(0.0224)&(0.0064)&(0.0073)&(0.0145)&(0.0125)&(0.0003)&(0.0112)\\&&&[-0.0002]&[-0.0030]&[0.0064]&[0.0123]&[0.0044]&[-0.0075]&[-0.0001]&[0.0188]\\ \cline{3-11}
&&\multirow{4}{0.5cm}{0.3}
&0.0141&20.2657&3.3542&9.2373&6.2044&14.0755&0.0203&12.5711\\&&&\{0.0142\}&\{20.2050\}&\{3.3514\}&\{9.2504\}&\{6.1768\}&\{14.0424\}&\{0.0205\}&\{12.5814\}\\&&&(0.0001)&(0.0342)&(0.0074)&(0.0126)&(0.0168)&(0.0226)&(0.0002)&(0.0170)\\&&&[0.0000]&[-0.0607]&[-0.0028]&[0.0131]&[-0.0275]&[-0.0331]&[0.0001]&[0.0102]\\ \cline{1-11}

\multirow{16}{0.5cm}{3}
&\multirow{8}{0.5cm}{0.03}
&\multirow{4}{0.5cm}{0.2}
&0.0623&17.8372&3.1044&6.1882&6.9786&10.9210&0.0811&9.2116\\&&&\{0.0630\}&\{17.8096\}&\{3.1170\}&\{6.1873\}&\{6.9644\}&\{10.9083\}&\{0.0802\}&\{9.2242\}\\&&&(0.0004)&(0.0291)&(0.0072)&(0.0093)&(0.0170)&(0.0178)&(0.0005)&(0.0129)\\&&&[0.0007]&[-0.0276]&[0.0126]&[-0.0009]&[-0.0142]&[-0.0127]&[-0.0009]&[0.0126]\\ \cline{3-11}
&&\multirow{4}{0.5cm}{0.3}
&0.0188&24.1880&3.5756&12.0243&6.3663&17.8405&0.0266&15.5733\\&&&\{0.0188\}&\{24.1235\}&\{3.5795\}&\{12.0480\}&\{6.3677\}&\{17.7745\}&\{0.0263\}&\{15.6012\}\\&&&(0.0002)&(0.0432)&(0.0085)&(0.0154)&(0.0194)&(0.0306)&(0.0002)&(0.0202)\\&&&[0.0000]&[-0.0645]&[0.0038]&[0.0237]&[0.0014]&[-0.0660]&[-0.0003]&[0.0279]\\ \cline{2-11}

&\multirow{8}{0.5cm}{0.04}
&\multirow{4}{0.5cm}{0.2}
&0.0611&19.3283&2.6563&5.4251&7.6864&11.7030&0.0760&8.0054\\&&&\{0.0613\}&\{19.3620\}&\{2.6576\}&\{5.4079\}&\{7.6970\}&\{11.7262\}&\{0.0757\}&\{7.9898\}\\&&&(0.0004)&(0.0281)&(0.0066)&(0.0083)&(0.0175)&(0.0171)&(0.0005)&(0.0120)\\&&&[0.0002]&[0.0336]&[0.0013]&[-0.0172]&[0.0107]&[0.0232]&[-0.0003]&[-0.0156]\\ \cline{3-11}
&&\multirow{4}{0.5cm}{0.3}
&0.0185&25.4765&3.2085&10.9786&6.7965&18.6985&0.0256&14.1614\\&&&\{0.0188\}&\{25.4819\}&\{3.2167\}&\{10.9842\}&\{6.7734\}&\{18.7273\}&\{0.0255\}&\{14.1754\}\\&&&(0.0002)&(0.0443)&(0.0076)&(0.0145)&(0.0198)&(0.0320)&(0.0002)&(0.0187)\\&&&[0.0003]&[0.0054]&[0.0082]&[0.0056]&[-0.0232]&[0.0288]&[-0.0002]&[0.0140]\\ 
 \hline
\end{tabular}
\end{table}

Table \ref{ktable_long} presents the movements of option prices when the maturity changes. We calculate the multi-step option prices and corresponding MC estimated prices. For easier comparisons, we fix the barrier levels for six steps at 6 month intervals, considering 2 steps for 1 year maturity, 4 steps for 2 year maturity, and 6 steps for 3 year maturity. When the maturity $T$ is 1, up barriers are set to be 105 and then 108 and down barriers are set to be 95 and then 92. When $T=2$, up barriers are $105-108-110-113$ and down barriers are $95-92-90-87$. When $T=3$, up barriers are $105-108-110-113-115-118$ and down barriers are $95-92-90-87-85-83$. As in Tables \ref{k100table_BB_RMS} and \ref{k100dtable_BB}, Table \ref{ktable_long} has 4 numbers in each cell: multi-step prices, MC prices, MC standard errors, and the differences between MC price and the formula price. 
\clearpage

When we investigate multi-step option prices in Table \ref{ktable_long}, we can see that all multi-step barrier options except UOP have increasing prices as the maturity gets longer. UOP prices decrease when the maturity increases, but the price differences are small. Ordinary option prices tend to increase as the maturity increases, but it may not be true with barrier options. UIP price and UOP price sum up to the ordinary put option price when the strike, the barrier, and the expiration are the same. From Table \ref{ktable_long}, UIP prices increase more rapidly than the ordinary put prices as $T$ increases. When the maturity-sensitivity of UIP prices is higher than that of the ordinary put prices, the maturity-sensitivity of UOP prices should be negative. In other words, the decrease of UOP prices in $T$ does not contradict the option pricing theory. 
Another perspective to the decrease of UOP prices would be as follows. As the maturity gets longer, variance of the terminal asset price gets larger, and thus, we generally expect that the ordinary option prices tend to increase as $T$ increases. Therefore, along with more possibility of touching the barrier, in-barrier options prices tend to increase as $T$ increases as we can verify with UIC, UIP, DIC, and DIP prices. However, out-barrier option prices have mixed effects of increased maturity. UOC, DOC, and DOP prices increase with $T$ only slightly. UOP prices even slightly decrease with $T$. In case of UOP, for instance, it is advantageous for the option holder that the underlying prices have a large variance at the expiration because the payoff, $(K-S(T))^+$ is more likely to be large, but on the other hand, it is not advantageous for the underlying prices to be more variable because the probability of being knocked out will be larger. 

Regarding Monte Carlo simulation, RMS relative error for up-barrier options is 0.0034(0.34\%) and that for down-barrier options is 0.004(0.4\%). Overall, SEs and absolute errors are small. In addition, SEs tend to increase as $T$ increases, which is easily anticipated since the standard deviation of the payoff at the expiration increases as $T$ increases and it in turn gives larger SEs. Again, the MC simulation used 1,000,000 sample paths.



\section{Approximation of an arbitrary barrier}

In this section, we present two examples of approximating curved barriers by multi-step barriers. With an arbitrary barrier, the explicit form of the barrier option price is not known. One can try numerical calculation with Monte Carlo methods; however, often times, Monte Carlo methods do not provide a fast and reliable result. Thus, we would like to suggest approximating the prices of curved-barrier options with muti-step barrier options' prices. In Examples \ref{ex4} and \ref{ex5}, we approximate the probabilities that are the basis for calculating curved-barrier option prices by the formula given in Proposition \ref{prop1}. At the end of this section, we also compare the approximated option prices by multi-step barriers with the option prices computed from formulas given in \cite{kunitomo1992} when the barrier is of the exponential form.

\begin{example}\label{ex4}
Let $\{X(t): 0 \leq t \leq T \}$ denote a Brownian motion with drift $\mu$ and diffusion coefficient $\sigma$ and $M(s,t)$ denote the maximum of $X$ in the time interval from $s$ to $t$. For a fixed time point $t$, the probability that $X(t)$ is less than $\mu t + z_{\alpha} \sigma \sqrt{t}$ is $1-\alpha$ where $z_{\alpha}$ is the $100(1-\alpha)$-percentile of the standard normal distribution; for instance, such probability is 0.95 if $\mu t +1.645 \sigma \sqrt{t}$ is used. Suppose we set 
$$g(t)=\mu t + z_{\alpha} \sigma \sqrt{t}$$
as the upper barrier and would like to calculate the probability
\begin{equation}
\label{ex4eqn1}
Pr(X(t) \leq g(t), ~0 \leq t \leq T).
\end{equation}
Then the probability can be approximated by (\ref{prop1eqn1}) with $m_i=g(t_i)$, $m_i=g(t_{i-1})$, or $m_i=(g(t_{i-1})+g(t_i))/2$ for $0=t_0<t_1 <\cdots<t_n=T$. As in Corollary \ref{coro1}, $x_i$'s are to be set as $m_i \wedge m_{i+1}$ for $i=0, \cdots, n-1$ and $x_n=m_n$ since there exist no icicles. For a sufficiently large $n$, we believe that (\ref{prop1eqn1}) will give us a good approximation for (\ref{ex4eqn1}). Table \ref{ex4table} provides the approximated values of (\ref{ex4eqn1}) with $m_i=(g(t_{i-1})+g(t_i))/2$, $n=10$, $t_1=1/20$, and $t_2=2/20$, $\cdots$, $t_n=T=10/20$. As $1-\alpha$ increases, the probability that $X(t)$ stays under the barrier gets increased, which complies with our intuition. We also observe that the probability of not hitting the barrier stays almost the same when we change the values of $\mu$ and $\sigma$. It is a natural result because $X(t) \leq g(t)$ is equivalent to $\frac{X(t)-\mu t}{\sigma \sqrt{t}} \leq z_{\alpha}$ and $\frac{X(t)-\mu t}{\sigma \sqrt{t}}$ has the same distribution regardless of the values of $\mu$ and $\sigma$.

\end{example}

\begin{table}[hbt!]
\caption{Approximation of $Pr(X(t) \leq \mu t + z_{\alpha} \sigma \sqrt{t}, ~0 \leq t \leq T)$ in Example \ref{ex4}, with $n=10$, $T=0.5$, and equally-spaced $t_i$'s}
\renewcommand{\tabcolsep}{1.6mm}
\renewcommand{\arraystretch}{1.1}
\begin{center}
\begin{tabular}{|c|c|c|c|c|c|c|c|c|c|} \hiderowcolors\cline{1-10}

\multicolumn{1}{|c|}{} & \multicolumn{3}{c|}{$\mu=0.01$} & \multicolumn{3}{|c|}{$\mu=0.02$}& \multicolumn{3}{|c|}{$\mu=0.03$}\\ \cline{1-10}

$1-\alpha$ & $\sigma=0.1$ & $\sigma=0.2$& $\sigma=0.3$ & $\sigma=0.1$ & $\sigma=0.2$ & $\sigma=0.3$ & $\sigma=0.1$ & $\sigma=0.2$ & $\sigma=0.3$ \\ \cline{1-10}
0.90 & 0.3692&	0.3693&	0.3693&	0.3689&	0.3692&	0.3693&	0.3686&	0.3691&	0.3692 \\ \hline
0.95 & 0.5074&	0.5079&	0.5081&	0.5062&	0.5074&	0.5077&	0.5050&	0.5068&	0.5074 \\ \hline
0.99 & 0.7230&	0.7239&	0.7242&	0.7211&	0.7230&	0.7236&	0.7192&	0.7220&	0.7230 \\ \hline
\end{tabular}
\end{center}
\label{ex4table}
\end{table}

\begin{example}\label{ex5}
Under the Black-Scholes model, 
$$S(t) =S(0) e^{X(t)}, \qquad t \geq 0,$$
where $\{X(t): t \geq 0\}$ is a Brownian motion with the drift parameter $\mu$ and the diffusion parameter $\sigma$. Suppose we are interested in the event that the underlying asset price does not go beyond a linear barrier; in other words, we are interested in the event of 
$$\{ S(t) \leq B(t)=C_0+C_1 t, ~0 \leq t \leq T\},$$
for some positive constants $C_0$ and $C_1$.
Then this event is the same as 
$$\{X(t) \leq \ln \left( \frac{C_0 +C_1 t}{S(0)} \right), 0 \leq t \leq T\},$$
which means that the upper barrier is a curved one. Again, by setting  $m_i=\ln \left(\frac{C_0+C_1 t_i}{S(0)}\right)$, $m_i=\ln \left(\frac{C_0+C_1 t_{i-1}}{S(0)}\right)$, or $m_i=\left(\ln \left(\frac{C_0+C_1 t_i}{S(0)}\right)+\ln \left(\frac{C_0+C_1 t_{i-1}}{S(0)}\right)\right)/2$ for $0=t_0<t_1 <\cdots<t_n=T$, (\ref{prop1eqn1}) will approximate the probability of the event of interest. It is the up-barrier case without icicles, so $x_i$'s are to be set as $m_i \wedge m_{i+1}$ for $i=0, \cdots, n-1$ and $x_n$ as $m_n$. In Table \ref{ex5table}, we calculate the approximation of the probability that $S(t)$ is under $C_0+C_1 t$ for $0 \leq t \leq T$ using (\ref{prop1eqn1}) with ${m}_{i}= (\ln(\frac{{C}_{0}+{C}_{1}{t}_{i}}{S(0)})+\ln(\frac{{C}_{0}+{C}_{1}{t}_{i-1}}{S(0)}))/2$, $\mu=0.01$, and $\sigma=0.2$. $S(0)$, $n$, and $T$ are assumed to be 100, 10, and 0.5, respectively and $t_i$'s are equally spaced. As easily expected, as the intercept $C_0$ and/or the slope $C_1$ of the linear barrier increase, the probability of up-barrier not being hit increases. 

\end{example}

\begin{table}[hbt!]
\caption{Approximation of $Pr(S(t) \leq C_0+C_1 t, ~0 \leq t \leq T)$ in Example \ref{ex5}, with $S(0)=100$, $n=10$, $T=0.5$, and equally-spaced $t_i$'s}
\renewcommand{\tabcolsep}{2mm}
\renewcommand{\arraystretch}{1.1}
\begin{center}


\begin{tabular}{|c|c|c|c|c|c|} \hiderowcolors\cline{1-6}
\diagbox{${C}_{0}/{S(0)}$}{${C}_{1}/{S(0)}$}&{0}&{0.25}&{0.5}&{0.75}&{1}\\ \hline
1.025&	0.1333&	0.2867&	0.4447&	0.5760&	0.6757\\ \hline
1.050&	0.2611&	0.4744&	0.6530&	0.7743&	0.8501\\ \hline
1.075&	0.3799&	0.6178&	0.7874&	0.8849&	0.9359\\ \hline
1.100&	0.4877&	0.7233&	0.8688&	0.9409&	0.9730\\ \hline
\end{tabular}
\end{center}
\label{ex5table}
\end{table}

In order to illustrate the performance of the approximation of curved barrier option prices by multi-step barrier options, we compare the numerical results of multi-step barrier option prices with those by \cite{kunitomo1992} in Table \ref{Kunitomo8}. \cite{kunitomo1992} provides a European option valuation method with double(up and down) exponential barriers of the form, $A_2 \exp(\delta_2 t)$ and $A_1 \exp (\delta_1 t)$. The general pricing formula involves an infinite series, but they found an explicit option pricing formula for single exponential barrier options. Corollaries 4.1 and 4.2 in their paper calculate the down-and-out call option prices with the curved barrier, $A_2 \exp(\delta_2t)$ and the up-and-out put option prices with the barrier, $A_1 \exp(\delta_1 t)$, respectively. Table \ref{Kuni_UOP8} contains up-and-out put option prices when the up-barrier is $S(t) \leq A_1 \exp(\delta_1 t), 0 \leq t \leq 0.5$ with $A_1 =105$ or $110$ and $\delta_1=0.1$ or $0.2$. Table \ref{Kuni_DOC8} contains down-and-out call option prices when the down-barrier is $S(t) \geq A_2 \exp(\delta_2 t), 0 \leq t \leq 0.5$ with $A_2=95$ or $90$ and $\delta_2=-0.1$ or $-0.2$. To approximate the curved barrier with a multi-step barrier, we set the subintervals as $t_0=0,$ $t_1=1/24$, $\cdots$, $t_4=4/24$, $t_5=3/12$, $\cdots$, $t_8=T= 6/12$. Note that we have tighter subintervals in the first third, but we can use different sets of subintervals. From Table \ref{Kunitomo8}, we can see that the multi-step approximation works quite well with $n=8$ subintervals. Multi-step barriers can approximate not only exponential barriers but also any type of barriers. 
 
 	\begin{table}[]
\small
\centering
\caption{Exponential barrier option prices by \cite{kunitomo1992} and multi-step approximation; multi-step columns present option prices approximated by multi-step formula and K\&I columns present the option prices computed from Corollaries 4.1 and 4.2 in \cite{kunitomo1992}. Multi-step barriers are set as $B_i=(c(t_{i-1})+c(t_{i}))/2)$ where $c(t)$ is a curved knockout boundary given below and $t_0=0,$ $t_1=1/24$, $\cdots$, $t_4=4/24$, $t_5=3/12$, $\cdots$, $t_8=T= 6/12$. $S(0)=100$.}
\label{Kunitomo8}

\begin{subtable}{0.8\textwidth}
\caption{Up-and-out put option prices with $c(t)=A_1 \exp(\delta_1t)$ for $0 \leq t \leq T$}

\label{Kuni_UOP8}
\renewcommand{\tabcolsep}{1.6mm}
\renewcommand{\arraystretch}{1}
\begin{tabular}{|c|c|c|c|c|c||c|c||c|c|} \hiderowcolors \cline{5-10}
                      \multicolumn{4}{c|}{}& \multicolumn{2}{c||}{$K=90$}  & \multicolumn{2}{c||}{$K=100$} & \multicolumn{2}{c|}{$K=110$} \\ \cline{1-10}
$r$                     & $\sigma$                & $A_1$                    & $\delta_1$  & K\&I    & multi-step      & K\&I    & multi-step       & K\&I    & multi-step       \\ \cline{1-10}
\multirow{8}{*}{0.03} & \multirow{4}{*}{0.2} & \multirow{2}{*}{105} & 0.1 & 1.1605      & 1.1597   & 3.4520      & 3.4486    & 6.7481      & 6.7377    \\
                      &                      &                      & 0.2 & 1.2252      & 1.2216   & 3.7616      & 3.7467    & 7.6629      & 7.6234    \\ \cline{3-10}
                      &                      & \multirow{2}{*}{110} & 0.1 & 1.4127      & 1.4126   & 4.5558      & 4.5550    & 9.6632      & 9.6597    \\
                      &                      &                      & 0.2 & 1.4304      & 1.4298   & 4.6795      & 4.6758    & 10.1602     & 10.1461   \\ \cline{2-10}
                      & \multirow{4}{*}{0.3} & \multirow{2}{*}{105} & 0.1 & 2.0823      & 2.0878   & 3.9928      & 4.0033    & 6.2574      & 6.2722    \\
                      &                      &                      & 0.2 & 2.2321      & 2.2397   & 4.3697      & 4.3832    & 7.0120      & 7.0310    \\ \cline{3-10}
                      &                      & \multirow{2}{*}{110} & 0.1 & 2.9978      & 2.9969   & 6.0633      & 6.0613    & 9.9626      & 9.9585    \\
                      &                      &                      & 0.2 & 3.1014      & 3.0977   & 6.3682      & 6.3587    & 10.6573     & 10.6379   \\ \cline{1-10}
\multirow{8}{*}{0.04} & \multirow{4}{*}{0.2} & \multirow{2}{*}{105} & 0.1 & 1.0841      & 1.0833   & 3.2739      & 3.2707    & 6.4578      & 6.4477    \\
                      &                      &                      & 0.2 & 1.1450      & 1.1416   & 3.5711      & 3.5568    & 7.3498      & 7.3114    \\ \cline{3-10}
                      &                      & \multirow{2}{*}{110} & 0.1 & 1.3213      & 1.3212   & 4.3316      & 4.3309    & 9.2895      & 9.2861    \\
                      &                      &                      & 0.2 & 1.3380      & 1.3374   & 4.4512      & 4.4476    & 9.7780      & 9.7642    \\ \cline{2-10}
                      & \multirow{4}{*}{0.3} & \multirow{2}{*}{105} & 0.1 & 1.9949      & 2.0002   & 3.8478      & 3.8579    & 6.0524      & 6.0667    \\
                      &                      &                      & 0.2 & 2.1393      & 2.1466   & 4.2139      & 4.2268    & 6.7903      & 6.8086    \\ \cline{3-10}
                      &                      & \multirow{2}{*}{110} & 0.1 & 2.8756      & 2.8748   & 5.8538      & 5.8519    & 9.6612      & 9.6572    \\
                      &                      &                      & 0.2 & 2.9758      & 2.9721   & 6.1509      & 6.1417    & 10.3431     & 10.3241  \\ \hline

\end{tabular}
\end{subtable}

\bigskip

\begin{subtable}{0.8\textwidth}
\caption{Down-and-out call option prices with $c(t)=A_2 \exp(\delta_2t)$ for $0 \leq t \leq T$}
\label{Kuni_DOC8}
\renewcommand{\tabcolsep}{1.6mm}
\renewcommand{\arraystretch}{1}
\begin{tabular}{|c|c|c|c|c|c||c|c||c|c|} \cline{5-10}
                      \multicolumn{4}{c|}{}& \multicolumn{2}{c||}{$K=90$}  & \multicolumn{2}{c||}{$K=100$} & \multicolumn{2}{c|}{$K=110$} \\ \cline{1-10}
$r$            & $\sigma$                & $A_2$                    & $\delta_2$  & K\&I    & multi-step      & K\&I    & multi-step       & K\&I    & multi-step      \\ \hline
\multirow{8}{*}{0.03} & \multirow{4}{*}{0.2} & \multirow{2}{*}{95} & $-0.1$ & 8.3666   & 8.3531  & 4.7296   & 4.7246 & 2.1256   & 2.1238 \\
                      &                      &                     & $-0.2$ & 9.3694   & 9.3215  & 5.1010   & 5.0808 & 2.2356   & 2.2288 \\ \cline{3-10}
                      &                      & \multirow{2}{*}{90} & $-0.1$ & 11.7384  & 11.7354 & 6.1027   & 6.1021 & 2.5581   & 2.5580 \\
                      &                      &                     & $-0.2$ & 12.1728  & 12.1611 & 6.2126   & 6.2096 & 2.5801   & 2.5795 \\ \cline{2-10}
                      & \multirow{4}{*}{0.3} & \multirow{2}{*}{95} & $-0.1$ & 7.2767   & 7.2884  & 5.0570   & 5.0665 & 3.1979   & 3.2041 \\
                      &                      &                     & $-0.2$ & 8.0958   & 8.1061  & 5.4979   & 5.5072 & 3.4179   & 3.4247 \\ \cline{3-10}
                      &                      & \multirow{2}{*}{90} & $-0.1$ & 11.6246  & 11.6207 & 7.6586   & 7.6566 & 4.6069   & 4.6059 \\
                      &                      &                     & $-0.2$ & 12.3043  & 12.2854 & 7.9698   & 7.9602 & 4.7389   & 4.7343 \\ \cline{1-10}
\multirow{8}{*}{0.04} & \multirow{4}{*}{0.2} & \multirow{2}{*}{95} & $-0.1$ & 8.6552   & 8.6415  & 4.9367   & 4.9315 & 2.2470   & 2.2451 \\
                      &                      &                     & $-0.2$ & 9.6714   & 9.6227  & 5.3191   & 5.2983 & 2.3622   & 2.3550 \\ \cline{3-10}
                      &                      & \multirow{2}{*}{90} & $-0.1$ & 12.0836  & 12.0806 & 6.3528   & 6.3523 & 2.7004   & 2.7002 \\
                      &                      &                     & $-0.2$ & 12.5193  & 12.5075 & 6.4651   & 6.4621 & 2.7233   & 2.7226 \\ \cline{2-10}
                      & \multirow{4}{*}{0.3} & \multirow{2}{*}{95} & $-0.1$ & 7.4677   & 7.4798  & 5.2095   & 5.2193 & 3.3109   & 3.3172 \\
                      &                      &                     & $-0.2$ & 8.2987   & 8.3094  & 5.6599   & 5.6696 & 3.5372   & 3.5443 \\ \cline{3-10}
                      &                      & \multirow{2}{*}{90} & $-0.1$ & 11.8948  & 11.8908 & 7.8733   & 7.8713 & 4.7626   & 4.7616 \\
                      &                      &                     & $-0.2$ & 12.5811  & 12.5619 & 8.1900   & 8.1802 & 4.8981   & 4.8933\\ \hline
\end{tabular}
\end{subtable}
\end{table}

\section{Conclusion}

In this paper, we derived the general and explicit pricing formulas for the multi-step barrier options with an arbitrary number of steps and levels with or without icicles under the Black-Scholes model. In the meantime, we introduced the multi-step reflection principle which reveals the mechanism of repeated applications of the ordinary reflection principle of Brownian motion. Using Esscher transform and the multi-step reflection principle, we were able to find the joint distribution of the Brownian motion at the end of subintervals of time and their partial maximums, which is directly used to obtain the option pricing formula. We conducted numerical studies on eight types of barrier options with different types of barriers under different settings of volatilities, interest rates, strike prices, and maturities, and considered examples of approximating curved barriers by multi-step barriers with many steps. Monte Carlo simulated prices and their standard errors are also provided for different settings of input variables.

The novel contributions of this paper compared to \cite{lee2019a} are that the generalized multi-step reflection principle is found and that any finite number of steps and barrier levels can be handled by the proposed barrier structure. This study is important in the respect that we found the general pricing formulas that have not explicitly expressed so far in the literature. 
Multi-step barrier scheme equipped with the general pricing formula can be useful in the analysis of various financial products. Since the multi-step reflection principle applies to cases with any finite number of steps and we have the general pricing formula, proper sequences of multi-step barriers could approximate arbitrary barriers including curved barriers at a desired precision. We confirmed the good approximating performance by comparing the approximated prices with prices computed by the formula given in \cite{kunitomo1992} when the barrier is of the exponential form.  Another application would be, for example, the deposit insurance scheme where a multi-step barrier option is activated when the assets and liabilities hit the regulatory barriers. These applications are topics of future research. We would like to investigate the efficiency of the approximation of curved-barrier option prices with multi-step barrier option prices in several aspects, extend the results to cases with multi-assets, and find an explicit formula for the premium of deposit insurances in the future.

\medskip

{\bf Data Availability Statement:} Data sharing is not applicable to this article as no new data were created or analyzed in this study.


\bibliography{ref_mstep_rev3}

\begin{thebibliography}{27}
\expandafter\ifx\csname natexlab\endcsname\relax\def\natexlab#1{#1}\fi
\expandafter\ifx\csname url\endcsname\relax
  \def\url#1{\texttt{#1}}\fi
\expandafter\ifx\csname urlprefix\endcsname\relax\def\urlprefix{URL: }\fi

\bibitem[{Acworth et~al.(1998)Acworth, M. and P.}]{abg1998}
Acworth, P.~A., M., B. and P., G. (1998) A comparison of some {M}onte {C}arlo
  and quasi {M}onte {C}arlo techniques for option pricing.
\newblock In \textit{Monte Carlo and Quasi-Monte Carlo Methods 1996, Lecture
  Notes in Statistics} (eds. N.~H., H.~P., L.~G. and Z.~P.), 1--18. New York,
  NY: Springer.

\bibitem[{B\"{u}hlmann et~al.(1996)B\"{u}hlmann, Delbaen, Embrechts and
  Shiryaev}]{buhlmann1996}
B\"{u}hlmann, H., Delbaen, F., Embrechts, P. and Shiryaev, A.~N. (1996)
  No-arbitrage, change of measure and conditional {E}sscher transform.
\newblock \textit{CWI Quarterly}, \textbf{9}, 291--317.

\bibitem[{Chan(1999)}]{chan1999}
Chan, T. (1999) Pricing contingent claims on stocks driven by {L}\'{e}vy
  processes.
\newblock \textit{Annals of Applied Probability}, \textbf{9}, 504--528.

\bibitem[{Elliott et~al.(2007)Elliott, K., Chan and Lau}]{elliott2007}
Elliott, R.~J., K., S.~T., Chan, L. and Lau, J.~W. (2007) Pricing options under
  a generalized {M}arkov-modulated jump-diffusion model.
\newblock \textit{Stochastic Analysis and Applications}, \textbf{25}, 821--843.

\bibitem[{Gerber and Shiu(1994)}]{gerbershiu1994}
Gerber, H.~U. and Shiu, E. S.~W. (1994) Option pricing by {E}sscher transform.
\newblock \textit{Transactions of Society of Actuaries}, \textbf{46}, 99--140.

\bibitem[{Gerber and Shiu(1996)}]{gerbershiu1996}
--- (1996) Actuarial bridges to dynamic hedging and option pricing.
\newblock \textit{Insurance: Mathematics and Economics}, \textbf{18}, 183--218.

\bibitem[{Gerber et~al.(2012)Gerber, Shiu and Yang}]{gerberetal2012}
Gerber, H.~U., Shiu, E. S.~W. and Yang, H. (2012) Valuing equity-linked
  benefits and other contingent options: A discounted density approach.
\newblock \textit{Insurance: Mathematics and Economics}, \textbf{51}, 73--92.

\bibitem[{Glasserman(2004)}]{glasserman2004}
Glasserman, P. (2004) \textit{Monte Carlo methods in financial engineering}.
\newblock Springer.

\bibitem[{Gobet(2009)}]{gobet2009}
Gobet, E. (2009) Advanced {M}onte {C}arlo methods for barrier and related
  exotic options.
\newblock In \textit{Mathematical Modeling and Numerical Methods in Finance}
  (eds. P.~Ciarlet, A.~Bensoussan and Q.~Zhang), 496--528. Amsterdam,
  Netherlands: Elsevier.

\bibitem[{Guillaume(2010)}]{guillaume2010}
Guillaume, T. (2010) Step double barrier options.
\newblock \textit{Journal of derivatives}, \textbf{18}, 59--80.

\bibitem[{Harrison(1990)}]{harrison1990}
Harrison, J.~M. (1990) \textit{Brownian motion and stochastic flow systems}.
\newblock Krieger Publishing Company.
\newblock (Original work published 1985).

\bibitem[{Haug(2007)}]{haug2007}
Haug, E.~G. (2007) \textit{The Complete Guide to Option Pricing Formulas}.
\newblock McGraw-Hill Education, 2 edn.

\bibitem[{Heynen(1994)}]{heynen1994}
Heynen, R. C.and~Kat, H.~M. (1994) Partial barrier options.
\newblock \textit{Journal of Financial Engineering}, \textbf{3}, 253--274.

\bibitem[{Huang and Shiu(2001)}]{huangshiu2001}
Huang, Y.-C. and Shiu, E. S.~W. (2001) Discussion of ``{P}ricing dynamic
  investment fund protection''.
\newblock \textit{North American Actuarial Journal}, \textbf{4}, 153--157.

\bibitem[{Hwang et~al.(2015)Hwang, Chang and Wu}]{hwang2015}
Hwang, Y.-W., Chang, S.-C. and Wu, Y.-C. (2015) Capital forbearance, ex ante
  life insurance guaranty schemes, and interest rate uncertainty.
\newblock \textit{North American Actuarial Journal}, \textbf{19}, 94--115.

\bibitem[{Jang and Krvavych(2004)}]{jang2004}
Jang, J.-W. and Krvavych, Y. (2004) Arbitrage-free premium calculation for
  extreme losses using the shot noise process and the {E}sscher transform.
\newblock \textit{Insurance: Mathematics and Economics}, \textbf{35}, 97--111.

\bibitem[{Korn et~al.(2010)Korn, Korn and Kroisandt}]{korn2010}
Korn, R., Korn, E. and Kroisandt, G. (2010) \textit{Monte Carlo Methods and
  Models in Finance and Insurance}.
\newblock CRC Press.

\bibitem[{Kunitomo and Ikeda(1992)}]{kunitomo1992}
Kunitomo, N. and Ikeda, M. (1992) Pricing options with curved boundaries.
\newblock \textit{Mathematical Finance}, \textbf{2}, 275--298.

\bibitem[{Lee(2003)}]{lee2003}
Lee, H. (2003) Pricing equity-indexed annuities with path-dependent options.
\newblock \textit{Insurance: Mathematics and Economics}, \textbf{33}, 677--690.

\bibitem[{Lee et~al.(2019{\natexlab{a}})Lee, Ahn and Ko}]{lee2019b}
Lee, H., Ahn, S. and Ko, B. (2019{\natexlab{a}}) Generalizing the reflection
  principle of {B}rownian motion, and closed-form pricing of barrier options
  and autocallable investments.
\newblock \textit{North American Journal of Economics and Finance},
  \textbf{50}, 1--13.

\bibitem[{Lee and Ko(2018)}]{leeko2018}
Lee, H. and Ko, B. (2018) Valuing equity-indexed annuities with icicled barrier
  options.
\newblock \textit{Journal of the Korean Statistical Society}, \textbf{47},
  330--346.

\bibitem[{Lee et~al.(2019{\natexlab{b}})Lee, Ko and Song}]{lee2019a}
Lee, H., Ko, B. and Song, S. (2019{\natexlab{b}}) Valuing step barrier options
  and their icicled variations.
\newblock \textit{North American Journal of Economics and Finance},
  \textbf{49}, 396--411.

\bibitem[{Ng and Li(2011)}]{ngli2011}
Ng, A. C.-Y. and Li, J. S.-H. (2011) Valuing variable annuity guarantees with
  the multivariate {E}sscher transform.
\newblock \textit{Insurance: Mathematics and Economics}, \textbf{49}, 393--400.

\bibitem[{Rogers and Zane(1997)}]{rogers1997}
Rogers, L. C.~G. and Zane, O. (1997) Valuing moving barrier options.
\newblock \textit{Journal of Computational Finance}, \textbf{1}, 5--11.

\bibitem[{Schoutens(2003)}]{schoutens2003}
Schoutens, W. (2003) L\'{e}vy processes in finance: pricing financial
  derivatives.
\newblock \textit{Wiley series in probability and statistics}.

\bibitem[{Tiong(2000)}]{tiong2000}
Tiong, S. (2000) Valuing equity-indexed annuities.
\newblock \textit{North American Actuarial Journal}, \textbf{4}, 149--163.

\bibitem[{Wang(2016)}]{wang2016}
Wang, X. (2016) Discussion of “{C}apital forbearance, ex ante life insurance
  guaranty schemes, and interest rate uncertainty”.
\newblock \textit{North American Actuarial Journal}, \textbf{20}, 88--93.

\end{thebibliography}

\section*{Appendix}
\begin{appendices}

\section{Proof of Theorem \ref{thm0}}
\label{proofthm0}

We assume the cardinality of $J$ is $g$ which means that there are $g$ barriers with possibly different levels in $n$ subperiods, $(t_{i-1}, t_i)$ for $i=1, \cdots, n$. Let us denote $J= \{ i_1, i_2, \cdots, i_g\}$ with $1 \leq i_1 < i_2 < \cdots < i_g \leq n$, and divide the set of time points $t_1, \cdots, t_n$ into $g+1$ groups in such a way that the index sets are 

\begin{align*}
J_1&=\{1, \cdots, i_1 -1\}\\
J_2&=\{i_1, \cdots, i_2 -1\}\\
  \vdots &\hspace{1.5cm} \vdots\\
J_g&=\{i_{g-1}, \cdots, i_g -1\}\\
J_{g+1}&=\{i_g, \cdots, n\}.
\end{align*}

We assume that $J_1=\varnothing$ when the first subinterval has a horizontal barrier such that $i_1=1$. Then
\begin{align}
Pr(&\cap_{i=1}^n \{ X^0(t_i) \leq x_i \} , \cap_{\{k \in J\}} \{M^0(t_{k-1}, t_k) > m_k\}) \label{original0} \\
   &=Pr( \cap_{\{i \in J_1\}} \{X^0(t_i) \leq x_i\}, M^0(t_{i_1-1}, t_{i_1})>m_{i_1},  \cap_{\{i \in J_2\}} \{X^0(t_i) \leq x_i\}, \nonumber \\
   & \qquad      M^0(t_{i_2-1}, t_{i_2})>m_{i_2}, \cdots, \cap_{\{i \in J_g\}} \{X^0(t_i) \leq x_i\}, M^0(t_{i_g-1}, t_{i_g})>m_{i_g}, \nonumber \\
   &\qquad       \cap_{\{i \in J_{g+1}\}} \{X^0(t_i) \leq x_i\} ). \label{reflect0}
\end{align}

Let us use the mathematical induction for showing (\ref{eqn3_30}). Equation (\ref{eqn3_30}) is trivial with $g=0$. Suppose $g=1$ and the barrier is in $(t_{i_1-1}, t_{i_1})$. By the ordinary reflection principle, $X^0(t)$ for $0 < t < T$ has the same distribution as
$$\left\{\begin{array}{lc} X^0(t), & 0 < t<\tau_1,\\ 2m_{i_1} -X^0(t), & \tau_1<t < T,\end{array}\right.$$
where $m_{i_1}$ is the barrier level and $\tau_1 \in (t_{i_1-1}, t_{i_1})$ is the time point when the barrier is hit. 
Thus for $J_1=\{1, \cdots, i_1 -1\}$ and $J_2=\{i_1, \cdots, n\}$,
\begin{align*}
Pr(&\cap_{i=1}^n \{ X^0(t_i) \leq x_i \} , M^0(t_{i_1-1}, t_{i_1}) > m_{i_1}) \\
   &=Pr( \cap_{\{i \in J_1\}} \{X^0(t_i) \leq x_i\}, \cap_{\{i \in J_2\}} \{2m_{i_1}-X^0(t_i) \leq x_i\}).
\end{align*}
Since the drift parameter of $X^0$ is zero, it turns to
\begin{align*}
   Pr(& \cap_{\{i \in J_1\}} \{X^0(t_i) \leq x_i\}, \cap_{\{i \in J_2\}} \{2m_{i_1}-X^0(t_i) \leq x_i\}) \\
   &=Pr(\cap_{\{i \in J_1\}} \{-X^0(t_i) \leq x_i\}, \cap_{\{i \in J_2\}} \{X^0(t_i)+2m_{i_1} \leq x_i\}) \\
   &=Pr(\cap_{i=1}^n \{ s_iX^0(t_i) +2m[i] \leq x_i \}).
\end{align*} Note that $m[i]=m_i-m[i-1]$ for $i \in J$, $m[i]=m[i-1]$ for $i \notin J$, and $m[0]=0$. Also, when $i_1=1$, $J_1 =\varnothing$.

Now assume (\ref{eqn3_30}) holds when there are $g$ horizontal barriers and see if the same equation holds when one more horizontal barrier is added. We assume that after $g$ reflections of the Brownian path, $X^0(t)$ has the same distribution as
$$\left\{\begin{array}{lc} X^0(t), & 0 < t<\tau_1,\\ -X^0(t)+2m_{i_1}, & \tau_1<t < \tau_2, \\
X^0(t)+2(m_{i_2}-m_{i_1}), & \tau_2 <t <\tau_3, \\
 \vdots&\vdots\\
(-1)^{g-1} X^0 (t) + 2(m_{i_{g-1}}-m_{i_{g-2}}+ \cdots +(-1)^g m_{i_1}),& \tau_{g-1}<t<\tau_g,\\
(-1)^{g} X^0 (t) + 2(m_{i_{g}}-m_{i_{g-1}}+ \cdots +(-1)^{g+1} m_{i_1}),& \tau_g<t<T, \end{array}\right.$$
where $\tau_i$ is the time point where the $i$th reflection is occurred. In other words, we assume
 \begin{align*}
Pr( &\cap_{i=1}^n \{X^0(t_i) \leq x_i\}, \cap_{k \in J} \{M^0(t_{k-1}, t_{k})>m_{k}\})\\
 &=Pr( \cap_{\{i \in J_1\}} \{X^0(t_i) \leq x_i\}, M^0(t_{i_1-1}, t_{i_1})>m_{i_1},  \cap_{\{i \in J_2\}} \{X^0(t_i) \leq x_i\}, \\
   & \qquad      M^0(t_{i_2-1}, t_{i_2})>m_{i_2}, \cdots, \cap_{\{i \in J_g\}} \{X^0(t_i) \leq x_i\}, M^0(t_{i_g-1}, t_{i_g})>m_{i_g}, \\
   &\qquad       \cap_{\{i \in J_{g+1}\}} \{X^0(t_i) \leq x_i\} )\\
   &= Pr(  \cap_{\{i \in J_1\}} \{X^0(t_i) \leq x_i\},  \cap_{\{i \in J_2\}} \{-X^0(t_i) +2m_{i_1}\leq x_i\},\\
   &\qquad  \cap_{\{i \in J_3\}} \{X^0(t_i) +2(m_{i_2}-m_{i_1})\leq x_i\}, \cdots, \\
   &\qquad  \cap_{\{i \in J_g\}} \{(-1)^{g-1} X^0(t_i) +2(m_{i_{g-1}}-m_{i_{g-2}}+ \cdots+(-1)^g m_{i_1}) \leq x_i\},\\
      &\qquad  \cap_{\{i \in J_{g+1}\}} \{(-1)^g X^0(t_i) +2(m_{i_{g}}-m_{i_{g-1}}+ \cdots+(-1)^{g+1} m_{i_1}) \leq x_i\}.
      \end{align*}
Now, consider adding one more barrier so that $J$ becomes $\{1, 2, \cdots, g+1\}$. This time, the set of time points $t_1, \cdots, t_n$ is divided into $g+2$ groups whose index sets are 
\begin{align*}
J_1&=\{1, \cdots, i_1 -1\}\\
J_2&=\{i_1, \cdots, i_2 -1\}\\
   \vdots  &\hspace{1.5cm} \vdots \\
J_{g+1}&=\{i_g, \cdots, i_{g+1}-1\}\\
J_{g+2}&=\{i_{g+1}, \cdots, n\}.
\end{align*}
With these $g+2$ groups,
\begin{align}
Pr( &\cap_{i=1}^n \{X^0(t_i) \leq x_i\}, \cap_{k \in J} \{M^0(t_{k-1}, t_{k})>m_{k}\}) \nonumber \\
   &=Pr( \cap_{i=1}^{i_{g+1}-1} \{X^0(t_i) \leq x_i\},  \cap_{k \in \{i_1, \cdots, i_g\}} \{M^0(t_{k-1}, t_{k})>m_{k}\}, \nonumber\\
   &\qquad \{M^0(t_{i_{g+1}-1}, t_{i_{g+1}}) > m_{i_{g+1}}\}, \cap_{i=i_{g+1}}^{n} \{X^0(t_i) \leq x_i\}) \nonumber\\
    &= Pr(\cap_{\{i \in J_1\}} \{X^0(t_i) \leq x_i\},  \cap_{\{i \in J_2\}} \{-X^0(t_i) +2m_{i_1}\leq x_i\},  \cdots, \nonumber \\
    &\qquad \cap_{\{i \in J_{g+1}\}} \{(-1)^g X^0(t_i) +2(m_{i_g}-m_{i_{g-1}}+ \cdots+(-1)^{g+1} m_{i_1}) \leq x_i\}, \nonumber \\
   &\qquad \cap_{\{i \in J_{g+2}\}}  \{ (-1)^g X^0(t_i) +2(m_{i_g}-m_{i_{g-1}}+ \cdots+(-1)^{g+1} m_{i_1}) \leq x_i\}, \nonumber \\
   & \max_{i_{g+1}-1 < \tau < i_{g+1}} \{(-1)^g X^0(\tau) +2(m_{i_g}-m_{i_{g-1}}+ \cdots+(-1)^{g+1} m_{i_1}\} > m_{i_{g+1}}) \nonumber\\
   &=Pr(\cap_{\{i \in J_1\}} \{X^0(t_i) \leq x_i\},  \cap_{\{i \in J_2\}} \{-X^0(t_i) +2m_{i_1}\leq x_i\}, \nonumber\\
   &\qquad \cdots, \cap_{\{i \in J_{g+1}\}} \{(-1)^g X^0(t_i) +2(m_{i_g}-m_{i_{g-1}}+ \cdots+(-1)^{g+1} m_{i_1}) \leq x_i\}, \nonumber\\
   &\qquad \cap_{\{i \in J_{g+2}\}} \{2m_{g+1}-[(-1)^g X^0(t_i) +2(m_{i_g}-m_{i_{g-1}}+ \cdots+(-1)^{g+1} m_{i_1})] \leq x_i\}) \nonumber\\
   &=Pr(\cap_{\{i \in J_1\}} \{X^0(t_i) \leq x_i\},  \cap_{\{i \in J_2\}} \{-X^0(t_i) +2m_{i_1}\leq x_i\}, \nonumber\\
   &\qquad \cdots, \cap_{\{i \in J_{g+1}\}} \{(-1)^g X^0(t_i) +2(m_{i_g}-m_{i_{g-1}}+ \cdots+(-1)^{g+1} m_{i_1}) \leq x_i\}, \nonumber\\
   &\qquad \cap_{\{i \in J_{g+2}\}} \{(-1)^{g+1} X^0(t_i) +2(m_{i_{g+1}}-m_{i_g}+ \cdots+(-1)^{g+2} m_{i_1}) \leq x_i\}). \label{eveng}
\end{align}
If $g$ is even, we replace $X^0(t)$ by $-X^0(t)$ and the above probabilities do not change since $X^0$ has the zero drift. Then we can easily check that 
$$(\ref{eveng})=Pr(\cap_{i=1}^n \{s_i X^0(t_i) +2m[i] \leq x_i \}),$$
which is the desired result. 

\section{Proof of Theorem \ref{thm1}}
\label{proofthm1}

To apply the multi-step reflection principle to a Brownian motion with nonzero drift, we use the method of Esscher transform 
as follows. When $\{X(t): 0 \leq t \leq T \}$ is a Brownian motion with drift $\mu$ and diffusion coefficient $\sigma$ and $M(s,t)$ is the maximum of $X$ in the time interval $(s,t)$, 
\begin{align*}
Pr(&\cap_{i=1}^n \{X(t_i) \leq x_i\}, \cap_{\{i \in J\}} \{M(t_{i-1}, t_i) > m_i\})\\
&=E\left[e^{\frac{\mu}{\sigma^2} X(T)-\frac{\mu}{\sigma^2} X(T)}I\left( \cap_{i=1}^n \{X(t_i) \leq x_i\}, \cap_{\{i \in J\}} \{M(t_{i-1}, t_i) > m_i\} \right) \right]\\
&=E[e^{-\frac{\mu}{\sigma^2} X(T)}] E\left[e^{\frac{\mu}{\sigma^2} X(T)}I\left( \cap_{i=1}^n \{X(t_i) \leq x_i\}, \cap_{\{i \in J\}} \{M(t_{i-1}, t_i) > m_i\} \right); -\frac{\mu}{\sigma^2} \right]. 
\end{align*}
The second equality above uses (\ref{expyh}) with $h=-\frac{\mu}{\sigma^2}$.
As seen in Section \ref{prelim}, $X$ becomes a Brownian motion with drift $\mu+h\sigma^2$ and diffusion $\sigma$ under the Esscher measure of the parameter $h$. Since $h=-\frac{\mu}{\sigma^2}$, the drift of $X$ under the Esscher measure is zero. Thus, we can write the above probability as
\begin{align*}
E[&e^{-\frac{\mu}{\sigma^2} X(T)}] E\left[e^{\frac{\mu}{\sigma^2} X^0(T)}I\left( \cap_{i=1}^n \{X^0(t_i) \leq x_i\}, \cap_{\{i \in J\}} \{M^0(t_{i-1}, t_i) > m_i\} \right) \right]\\
&=E[e^{-\frac{\mu}{\sigma^2} X(T)}]\int_{x \leq x_n} e^{\frac{\mu}{\sigma^2} x} dPr(\cap_{i=1}^{n-1} \{X^0(t_i) \leq x_i\}, X_0(T) \leq x, \cap_{\{i \in J\}} \{M^0(t_{i-1}, t_i) > m_i\} )\\
&=E[e^{-\frac{\mu}{\sigma^2} X(T)}]\int_{x \leq x_n}  e^{\frac{\mu}{\sigma^2} x} dPr(\cap_{i=1}^{n-1} \{s_iX^0(t_i) +2m[i] \leq x_i\}, s_nX^0(T)+2m[n] \leq x ) \\
&=E[e^{-\frac{\mu}{\sigma^2} X(T)}]E[e^{\frac{\mu}{\sigma^2} (s_n X^0(T)+2m[n])} I\left( \cap_{i=1}^{n} \{s_iX^0(t_i) +2m[i] \leq x_i\} \right)]\\
&=E[e^{-\frac{\mu}{\sigma^2} X(T)}]E[e^{\frac{\mu}{\sigma^2} (X^0(T)+2m[n])} I\left( \cap_{i=1}^{n} \{s_iX^0(t_i) +2m[i] \leq x_i\} \right)],
\end{align*}
since $s_n=1$. 
Using $E[e^{-\frac{\mu}{\sigma^2} X(T)}] \times E[e^{\frac{\mu}{\sigma^2} X^0(T)}]=1$ and (\ref{expyh}) again, we get
\begin{align}
E[&e^{-\frac{\mu}{\sigma^2} X(T)}]E[e^{\frac{\mu}{\sigma^2} (X^0(T)+2m[n])} I\left( \cap_{i=1}^{n} \{s_iX^0(t_i) +2m[i] \leq x_i\} \right)]\nonumber \\
  &= E[e^{-\frac{\mu}{\sigma^2} X(T)}] \times E[e^{\frac{\mu}{\sigma^2} X^0(T)}] e^{\frac{2\mu}{\sigma^2}m[n]} E[I\left( \cap_{i=1}^{n} \{s_iX^0(t_i) +2m[i] \leq x_i\}\right); \frac{\mu}{\sigma^2}]\nonumber\\
  &=e^{\frac{2\mu}{\sigma^2}m[n]} E[I\left( \cap_{i=1}^{n} \{s_iX^0(t_i) +2m[i] \leq x_i\}\right); \frac{\mu}{\sigma^2}]\nonumber\\
  &=e^{\frac{2\mu}{\sigma^2}m[n]} Pr(\cap_{i=1}^{n} \{s_i X(t_i) +2m[i] \leq x_i\} ). \label{thm0nonzero}
    \end{align}
For the last identity, we used the fact that $X^0$ is a Brownian motion with drift $\mu$ and diffusion coefficient $\sigma$ under the Esscher measure of the parameter $\frac{\mu}{\sigma^2}$. 

\end{appendices}

\end{document}